\documentclass[11pt]{article}

\usepackage{dsfont}

\usepackage{tikz}
\usepackage{pgfplots}
\pgfplotsset{compat = newest}
\usetikzlibrary{decorations.pathreplacing,calligraphy}

\usepackage{amssymb,amsmath,amsthm}
\usepackage{bbm}
\usepackage[noend]{algpseudocode}
\usepackage{graphicx}
\usepackage{verbatim}
\usepackage{url,xspace,hyperref}
\usepackage{makecell}
\usepackage{cite}
\usepackage{caption}
\usepackage{subcaption}
\usepackage{multirow}
\usepackage{multicol}
\usepackage{latexsym}
\usepackage{amsmath,amssymb,enumerate}
\usepackage{xspace}
\usepackage{algorithm,algorithmicx}
\usepackage{float}
\usepackage{xcolor}
\usepackage{mathrsfs}
\usepackage{cleveref}
\usepackage{cancel}

\usepackage{fullpage}
\usepackage{geometry}
\geometry{letterpaper,tmargin=1in,bmargin=1in,lmargin=1in,rmargin=1in}
\usepackage{setspace}


\newcounter{note}[section]


\allowdisplaybreaks


\def \R{\mathbb R}

\def \oo{\Tilde{o}}
\def \aa{\Tilde{a}}
\def \sscore{{\widetilde{score}}}

\def\sse{\subseteq}

\def \E{\mathbb{E}}

\newcommand{\dom}{\mathsf{dom}}

\def\asc{\ensuremath{{\sf ASC}}\xspace}
\def\topt{\overline{\opt}}

\DeclareMathOperator*{\argmax}{arg\,max}

\numberwithin{equation}{section}

\theoremstyle{plain}
\newtheorem{thm}{Theorem}[section]

\newtheorem{lem}[thm]{Lemma}

\theoremstyle{definition}
\newtheorem{defn}[thm]{Definition}

\theoremstyle{remark}

\theoremstyle{plain}

\theoremstyle{remark}

\theoremstyle{plain}

\newcommand{\ignore}[1]{}

\title{Minimum Cost Adaptive Submodular Cover  }

\author{Hessa Al-Thani \and Yubing Cui\and Viswanath Nagarajan\thanks{Department of Industrial and Operations Engineering, University of Michigan, Ann Arbor, USA. Research supported in part by NSF grants CMMI-1940766 and CCF-2006778.}}
%
%

\begin{document}
\date{}
\maketitle

\begin{abstract}
Adaptive submodularity is a fundamental concept in stochastic optimization, with numerous applications such as sensor placement, hypothesis identification and viral marketing. We consider the problem of minimum cost cover of adaptive-submodular functions, and provide a  $4(1+\ln Q)$-approximation algorithm, where $Q$ is the goal value.  In fact, we consider a significantly more general objective of minimizing the $p^{th}$ moment of the coverage cost, and show that our algorithm {\em simultaneously} achieves a $(p+1)^{p+1}\cdot (\ln Q+1)^p$ approximation guarantee for all $p\ge 1$.  
 All our approximation ratios are  best possible up to constant factors (assuming $P\ne NP$).  
Moreover, our results also extend to the setting where one wants to  cover {\em multiple} adaptive-submodular functions.   Finally, we evaluate the empirical performance of our algorithm on instances of  hypothesis identification.


\end{abstract}
 
\section{Introduction}

Adaptive stochastic optimization, where an algorithm makes sequential decisions while (partially) observing uncertainty, arises in numerous applications such as active learning \cite{D01}, sensor placement \cite{GuestrinKS05} and viral marketing \cite{Tong+17}. Often, these
applications involve an underlying submodular function, and the framework of adaptive-submodularity (introduced by \cite{GK11}) has been widely used to solve these problems. In this paper, we study a basic problem in this context:  covering an adaptive-submodular function at the minimum expected cost.  

In some applications, such as sensor placement (or stochastic set cover \cite{GV06}), the uncertainty can be captured by an {\em independent} random variable associated with each decision. However, there are also a number of applications where the random variables associated with different decisions are correlated. The adaptive-submodularity framework that we consider is also applicable in certain applications  involving correlations.

One such application  is  the viral marketing problem, where we are given a social network and target $Q$, and the goal is to influence at least $Q$ users to adopt a new product. A user can be influenced in two ways (i) directly because the user 
is offered a promotion, or (ii) indirectly because some friend of the user was influenced {\em and} the friend  influenced this user. We incur a cost only in case (i), which accounts for the promotional offer. A widely-used model for  influence behavior is the {\em independent cascade model}~\cite{KempeKT15}. Here, each arc $(u,v)$ has a value $p_{uv}\in [0,1]$ that represents the probability that user $u$ will influence user $v$ (if $u$ is already influenced). A solution is  a sequential process that in each step, selects one user $w$ to influence directly, after which we  get to observe which of $w$'s friends were influenced (indirectly),  and which of their friends were influenced, and so on. So, the solution can utilize these partial observations to make decisions {\em adaptively}. 
Such an adaptive solution can be represented by a decision tree; however, it may require exponential space to store explicitly. We will analyze simple solutions that can be implemented in polynomial time (and space), but our performance guarantees  are relative to an optimal solution that can be very complex. Also, note that the random observations associated with different decisions (in the viral marketing problem) are highly correlated:   the set of nodes that get (indirectly) influenced by any node $w$ depends on the entire network (not just $w$).

While there has been extensive prior work on minimum cost cover of adaptive submodular functions~\cite{GV06,GK11,INZ12,HellersteinKP21,EKM21},  all these results   focus on minimizing the expected cost, which is a risk-neutral objective. However, one may also be interested in minimizing a higher moment of the random cost, which corresponds to a {\em risk-averse} objective. We note that the quality of a solution may vary greatly depending on the chosen objective. For example, consider two solutions $A$ and $B$. Solution $A$ has cost $1$ with probability (w.p.) $1-\frac1M$ and cost $M$ w.p. $\frac1M$. Solution $B$ has cost $M^{1/3}$  w.p.  $1$. The expected cost (i.e., first moment) of $A$ is $2-\frac1M$, whereas that of $B$ is $M^{1/3}$. On the other hand, the second moment of $A$ is $\approx M$, whereas that of $B$ is $M^{2/3}$. Clearly, solution $A$ is much better in terms of the expected cost, whereas solution $B$ is much better in terms of the second moment. 

Motivated by this, we consider the adaptive submodular cover problem under the more general objective of minimizing the $p^{th}$ moment  cost, for any $p\ge 1$. Somewhat surprisingly, we show that there is a  {\em universal}  algorithm for adaptive-submodular cover that approximately minimizes  all moments simultaneously. We note that our result is the first approximation algorithm for higher moments ($p>1$), even for widely studied special cases such as (independent) stochastic submodular cover~\cite{INZ12,HellersteinKP21} and optimal decision tree~\cite{AH12,GuilloryB09,GNR17}.

\subsection{Problem Definition}\label{sec:Definition}

\paragraph{Random items.} Let  $E$ be a finite set of $n$ {\em items}. Each item $e \in E$ corresponds to a random variable $\Phi_e \in \Omega$, where $\Omega$ is the {\em outcome space} (for a single item). We use $\Phi= \langle \Phi_e : e\in E\rangle$ to denote the vector of all random variables (r.v.s). The r.v.s may be arbitrarily correlated across items.   
We use upper-case letters to represent r.v.s and the corresponding lower-case letters to represent realizations of the r.v.s. Thus, for any item $e$, $\phi_e\in \Omega$ is the realization of $\Phi_e$; and $\phi=\langle \phi_e : e\in E\rangle$ denotes the realization  of $\Phi$.
 Equivalently,  we can represent the realization $\phi$  as a subset $\{(e,\phi_e): e\in E\}\sse E\times \Omega$ of item-outcome pairs.

 A {\em partial realization} $\psi\sse E\times \Omega$ refers to the realizations of any {\em subset} of items;  $\dom(\psi)\sse E$ denotes the items whose realizations are represented in $\psi$, and $\psi_e$ denotes the realization of any item $e\in \dom(\psi)$. Note that a partial realization contains at most one pair of the form $(e,*)$ for any item $e\in E$. The (full) realization $\phi$ corresponds to a partial realization with $\dom(\phi)=E$. For two partial realizations $\psi,\psi'\sse E\times \Omega$, we say that $\psi$ is a \textit{subrealization} of $\psi{'}$ 
 (denoted $\psi\preccurlyeq\psi{'}$)  
 if $\psi\subseteq\psi'$; in other words, $\dom(\psi)\sse \dom(\psi')$ and $\psi_e=\psi'_e$ for all $e\in \dom(\psi)$.\footnote{We use the notation $\psi\preccurlyeq\psi{'}$ instead of $\psi\sse\psi{'}$ in order to be consistent with prior works.} 
 Two partial realizations $\psi,\psi'\sse E\times \Omega$ are said to be {\em disjoint} if there is no full realization $\phi$ with $\psi \preccurlyeq\phi$ and $\psi' \preccurlyeq\phi$; in other words, there is some item $e\in \dom(\psi)\cap \dom(\psi')$ such that the realization of $\Phi_e$ is different under $\psi$ and $\psi'$.
 
 We assume that there is a prior probability distribution $p(\phi)=\Pr[\Phi=\phi]$ over realizations $\phi$. Moreover, for any partial realization $\psi$, we assume that we can compute the posterior distribution $p(\phi|\psi) = \Pr(\Phi=\phi|\psi\preccurlyeq\Phi).$

\paragraph{Utility function.} 
 In addition to the random items (described above), there is a {\em utility function} $f : 2^{E\times\Omega} \to \R_{\geq 0}$ that assigns a value to any partial realization.  We will assume that this function is monotone, i.e., having more realizations can not reduce the value. Formally, 
 \begin{defn}[Monotonicity]\label{Definition:mon}
A function $f : 2^{E\times\Omega} \to \R_{\geq 0}$ is {\bf monotone} if 
$$f(\psi)\le f(\psi')\quad \mbox{ for all  partial realizations }\psi\preccurlyeq\psi{'}.$$
 \end{defn}

We also assume that the function $f$ can always achieve its maximal value, i.e.,

\begin{defn}[Coverable]\label{Definition:coverable}
Let $Q$ be the maximal value of function $f$. Then, function $f$ is said to be {\bf coverable} if this value $Q$ can be achieved under every (full) realization, i.e., 
\[f(\phi)=Q \mbox{ for all realizations $\phi$ of  }\Phi.\]
\end{defn}

Furthermore, we will assume that the function $f$ along with the probability distribution $p(\cdot)$ satisfies a submodularity-like property. Before formalizing this, we need the following definition. 
\begin{defn}[Marginal benefit]  The {\bf conditional expected marginal benefit}  of an item $e\in E$ conditioned on observing the partial realization $\psi$ is:
\[\Delta(e|\psi):=\E\left[f(\psi \cup (e, \Phi_e)) -f( \psi) \,|\, \psi\preccurlyeq\Phi\right] \,\,=\,\, \sum_{\omega\in \Omega}\Pr[\Phi_e=\omega|\psi\preccurlyeq\Phi]\cdot \left( f(\psi\cup (e,\omega) )-f(\psi)\right).\]
\end{defn}

We will assume that function $f$ and distribution $p(\cdot)$ jointly satisfy the adaptive-submodularity property, defined as follows. 
\begin{defn}[Adaptive submodularity]\label{Definition:adsub}
        A function $f : 2^{E\times\Omega} \to \R_{\geq 0}$ is {\bf adaptive submodular} w.r.t. distribution $p(\phi)$ if for all partial realizations $\psi\preccurlyeq\psi^{'}$, and all items $e \in E\setminus\dom(\psi^{'})$, we have
\[\Delta(e|\psi)\geq \Delta(e|\psi^{'}).\]
\end{defn}

In other words, this property ensures that the marginal benefit of an  item never increases as we condition on more realizations. 
Given any function $f$ satisfying Definitions~\ref{Definition:mon}, \ref{Definition:coverable} and \ref{Definition:adsub},  we can pre-process $f$ by subtracting $f(\emptyset)$, to get an equivalent function (that maintains these properties), and has a smaller $Q$ value. So, we may assume that $f(\emptyset)=0$.  

\def\cp{c_{p}}

\paragraph{Min-cost adaptive-submodular cover (\asc).} 

In this problem, each item $e\in E$ has a positive cost $c_e$. The goal is to select items (and observe their realizations)  sequentially until the observed realizations have function value $Q$.  The objective is to minimize the expected cost of selected items. 

Due to the stochastic nature of the problem, the solution concept here is much more complex than in the deterministic setting (where we just select a static subset). In particular, a solution  corresponds to a ``policy'' that maps observed realizations to the next selection decision. 
The observed realization at any point corresponds to a partial realization (namely, the realizations of the items selected so far). 
 Formally, a {\em policy} is a mapping $\pi:2^{E\times\Omega}\to E$, which specifies the next item $\pi(\psi)$ to select when the observed realizations are $\psi$.\footnote{Policies  and utility functions  are not necessarily defined over all subsets $2^{E\times \Omega}$, but only over partial realizations;  recall that a  partial realization is of the form $\{(e,\phi_e): e\in S\}$ where $\phi$ is some full-realization and $S\sse E$.}  
The policy $\pi$ terminates at the first point when $f(\psi)=Q$, where $\psi\sse E\times \Omega$ denotes the observed realizations so far. For any policy $\pi$ and full realization $\phi$, let $C(\pi , \phi )$ denote the total cost of items selected by policy $\pi$ under realization $\phi$. Then, the expected cost of   policy $\pi$ is:
$$c_{exp}(\pi) \,\,=\,\, \E_{\Phi}\left[C(\pi, \Phi)\right] \,\,=\,\, \sum_\phi p(\phi)\cdot C(\pi, \phi).$$
While minimizing the expected cost is the primary objective, we are also interested in minimizing higher moments of the cost. For any $p\ge 1$ and policy $\pi$, the $p^{th}$ moment 
of the policy's cost is:
$$\cp(\pi) \,\,=\,\, \E_{\Phi}\left[C(\pi, \Phi)^p\right] \,\,=\,\, \sum_\phi p(\phi)\cdot C(\pi,\phi)^p.$$

At any point in policy $\pi$, we refer to the cumulative cost incurred so far as the {\em time}. If $J_1, J_2,\cdots J_k$ denotes the (random) sequence of items selected by $\pi$ then for each $i\in \{1,2,\cdots k\}$, we view item $J_i$ as being selected during the time interval $[\sum_{h=1}^{i-1} c(J_h) \, ,\, \sum_{h=1}^{i} c(J_h) )$ and the realization of $J_i$ is only observed at time $\sum_{h=1}^{i} c(J_h)$. For any time $t\ge 0$, we use $\Psi(\pi,t)\sse E\times \Omega$ to denote the (random) realizations that have been observed by time $t$ in policy $\pi$. We note that $\Psi(\pi,t)$ only contains the realizations of items that have been {\em completely} selected by time $t$. Note that the policy terminates at the earliest time $t$ where $f(\Psi(\pi,t))=Q$.

Given any policy $\pi$, we define its \textit{cost $k$ truncation}   by running $\pi$ and stopping it just before the cost of selected items exceeds $k$. That is, we stop the policy as late as possible while ensuring that the cost of selected items never exceeds $k$ (for any realization).

\paragraph{Remark:} Our definition of the utility function $f$ is slightly more restrictive than the original definition \cite{GK11}. In particular, the utility function in \cite{GK11} is of the form $g:2^E\times \Omega^E\rightarrow \R_{\ge 0}$, where the function value $g(\dom(\psi), \Phi)$ for any partial realization $\psi$ is still random and can depend on the outcomes of unobserved items, i.e., those in $E\setminus \dom(\psi)$. Nevertheless, our formulation (\asc) still captures most applications of the formulation  studied in \cite{GK11}. See Section~\ref{sec:appln} for details.

\def\grd{\pi}
\def\opt{\sigma}
\subsection{Adaptive Greedy Policy}
Algorithm~\ref{alg:THE_algo} describes a natural greedy policy for min-cost adaptive-submodular cover, which has also been studied in prior works \cite{GK17,EKM21,HellersteinKP21}.

\begin{algorithm}[h!]
\caption{Adaptive Greedy Policy $\grd$.\label{alg:THE_algo}}
\begin{algorithmic}[1]
\State selected items $A \gets \emptyset$, observed realizations $\psi \gets \emptyset$
\While{$f(\psi)<Q$}
\State $e^{*} = \argmax_{e\in E\setminus A}\frac{\Delta(e|\psi)}{c_e}$
\State add $e^*$ to the selected items, i.e., $A \gets A\cup\{e^{*}\}$
\State select $e^*$ and observe $\Phi_{e^*}$
\State update $\psi \gets \psi\cup\{(e^{*}, \Phi_{e^{*}})\}$
\EndWhile
\end{algorithmic}
\end{algorithm}
\paragraph{Remark:}
Note that the policy $\grd$ remains the same if we replace the greedy choice by 
\begin{equation}
\label{eq:greedy-choice}
e^{*} = \argmax_{e\in E\setminus A}\frac{\Delta(e|\psi)}{c_e\cdot (Q-f(\psi))}.
\end{equation}
This is because the additional term $Q-f(\psi)$ is the same for each item $e\in E\setminus A$ (note that at any particular step, $\psi$ is a fixed partial realization). We will make use of this alternative greedy criterion in our analysis.

\subsection{Results and Techniques}
Our first main result is on the expected cost of the greedy policy.
\begin{thm}\label{Theorem: THE_thm}
Consider any instance of minimum cost adaptive-submodular cover, where the utility function $f:2^{E\times\Omega}\to\R_{\ge 0}$ is monotone, coverable and adaptive-submodular w.r.t. the probability distribution $p(\cdot)$. Suppose that there is some value $\eta>0$ such that $f(\psi)>Q-\eta$ implies $f(\psi)=Q$ for all partial realizations $\psi\sse E\times \Omega$. Then, the expected cost of the greedy policy is 
\[c_{exp}(\grd)\leq 4\cdot (1+\ln (Q/\eta))\cdot  c_{exp}(\opt),\]
where $\opt$ denotes the optimal policy.
\end{thm}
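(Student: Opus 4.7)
The plan is to bound $c_{exp}(\grd)$ via the tail-integral identity $c_{exp}(\grd) = \int_0^\infty \Pr[C(\grd,\Phi) > t]\,dt$, combined with an exponential decay in $t$ of the expected residual gap $R_t := \E[Q - f(\Psi(\grd,t))]$.

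The core ingredient is a per-state rate bound: for every partial realization $\psi$ reachable by $\grd$ with $f(\psi) < Q$, the greedy choice $e^{*} = \argmax_e \Delta(e\mid\psi)/c_e$ satisfies
\[
\frac{\Delta(e^{*}\mid\psi)}{c_{e^{*}}} \;\geq\; \frac{Q - f(\psi)}{4\,c_{exp}(\opt)}.
\]
I plan to derive this in two stages. First, for any policy $\pi$, telescoping the $f$-value along $\pi$'s trajectory conditioned on $\psi \preccurlyeq \Phi$ and invoking Definition~\ref{Definition:adsub} at each step (the marginal gain of $\pi$'s next item with respect to its own, richer observation history is dominated by $\Delta(\cdot\mid\psi)$) yields
\[
\E\!\left[\,f(\text{final state of }\pi) - f(\psi)\,\Big|\,\psi\preccurlyeq\Phi\right] \;\leq\; \E\!\left[\sum_{e \text{ chosen by }\pi}\! \Delta(e\mid\psi)\,\Big|\,\psi\preccurlyeq\Phi\right].
\]
Second, I apply this to the cost-truncation $\opt^{B}$ of $\opt$ at budget $B = 2\,c_{exp}(\opt)$: by Markov's inequality, $\opt^{B}$ still attains $f = Q$ with probability at least $1/2$, so (after transferring this to the conditional world $\psi \preccurlyeq \Phi$) its expected final $f$-value from $\psi$ is at least $f(\psi) + (Q-f(\psi))/2$. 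Since the cost of $\opt^{B}$ is always at most $B$, a cost-weighted pigeonhole on the sum gives $\max_e \Delta(e\mid\psi)/c_e \geq (Q - f(\psi))/(2B)$, which is the claimed per-state bound with constant $4$.

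With the rate bound in hand, the alternative greedy criterion~\eqref{eq:greedy-choice} implies that at each greedy step $i$, writing $\Psi^{i}$ for the state after the step, $\E[Q - f(\Psi^{i})\mid\Psi^{i-1}] \leq (1 - c_{e^{*}}/(4\,c_{exp}(\opt)))(Q - f(\Psi^{i-1}))$. A supermartingale argument applied to $Y_i := R^{i}\,e^{t_i/(4\,c_{exp}(\opt))}$, using $(1-x)e^{x} \leq 1$, then yields $\E[R_t] \leq Q\,\exp(-t/(4\,c_{exp}(\opt)))$ at step boundaries and hence (by monotonicity of $R_t$) for all $t$. The hypothesis that $f(\psi) > Q - \eta$ forces $f(\psi) = Q$ guarantees that $\grd$ has terminated once the residual falls below $\eta$; hence by Markov $\Pr[\grd\text{ unfinished at time }t] \leq R_t/\eta \leq (Q/\eta)\exp(-t/(4\,c_{exp}(\opt)))$. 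Splitting the tail integral at $t^{*} = 4\,c_{exp}(\opt)\ln(Q/\eta)$ (the point where this Markov bound first reaches $1$) and integrating the exponential tail beyond $t^{*}$ gives
\[
c_{exp}(\grd) \;\leq\; t^{*} + 4\,c_{exp}(\opt) \;=\; 4(1 + \ln(Q/\eta))\,c_{exp}(\opt),
\]
matching the claim.

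I anticipate the main obstacle lying in the second stage of the rate bound, namely, transferring the unconditional Markov guarantee $\Pr[C(\opt,\Phi) \leq B] \geq 1/2$ into the conditional bound $\E[f(\text{final state of }\opt^{B})\mid\psi\preccurlyeq\Phi] \geq (Q + f(\psi))/2$ for every reachable $\psi$. A naive averaging gives $\E_{\psi \sim \Psi(\grd,t)}[\E[C(\opt,\Phi)\mid\psi\preccurlyeq\Phi]] = c_{exp}(\opt)$ by the tower property, but an individual $\psi$ can have arbitrarily large conditional expected $\opt$-cost. I plan to resolve this by coupling the execution of $\opt^{B}$ ``on top of'' $\psi$ with its unconditional execution, and then invoking monotonicity and adaptive submodularity to convert the covering probability into the required lower bound on the expected gain contributed by the items $\opt^{B}$ selects.
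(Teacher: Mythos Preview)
Your pointwise per-state rate bound
\[
\frac{\Delta(e^{*}\mid\psi)}{c_{e^{*}}} \;\geq\; \frac{Q - f(\psi)}{4\,c_{exp}(\opt)}
\]
is false, and this is precisely the gap that invalidated the original argument of \cite{GK11} (see \cite{NS17,GK17}). A two-item stochastic-coverage instance already witnesses the failure: elements $\{1,2\}$; item $e_1$ of cost $1$ covers $\{1,2\}$ with probability $1-\epsilon$ and only $\{1\}$ otherwise; item $e_2$ of cost $M$ deterministically covers $\{2\}$. With $\epsilon=1/M^2$ one has $c_{exp}(\opt)=1+1/M$. Greedy picks $e_1$ first; at the reachable state $\psi=\{(e_1,\{1\})\}$ the only move is $e_2$, giving $\Delta(e^*\mid\psi)/c_{e^*}=1/M$, whereas your bound demands roughly $1/4$. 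Your proposed ``transfer to the conditional world'' cannot be carried out here: with $B=2c_{exp}(\opt)<3$, the truncated policy $\opt^{B}$ never reaches $e_2$, so conditioned on $\psi\preccurlyeq\Phi$ we get $f(\psi\cup P_\infty)=f(\psi)$, not $f(\psi)+(Q-f(\psi))/2$. What the coupling-plus-adaptive-submodularity argument you sketch \emph{does} deliver is only
\[
\frac{\Delta(e^*\mid\psi)}{c_{e^*}}\;\ge\;\frac{Q-f(\psi)}{B}\cdot\Pr\bigl[\opt^{B}\text{ covers }f\;\big|\;\psi\preccurlyeq\Phi\bigr],
\]
and the conditional probability on the right can be zero at greedy-reachable states. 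Since your supermartingale step needs the rate to be bounded below \emph{state by state}, this weaker inequality does not feed into it.

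The paper's proof is structurally different and never asserts a fixed-budget pointwise rate bound. Its analogue of your inequality is \eqref{eq:LB-score-psi}, where the truncation budget is $t/(L\beta)$---it \emph{scales with the elapsed greedy time} $t$---and the right-hand side retains the conditional coverage probability rather than a universal constant. Only after summing over all states at time $t$ does this become the averaged inequality $\E[score(t,\Psi_t)]\ge L\beta\,(a(t)-o(t/(L\beta)))/t$ of Lemma~\ref{Lemma: low_bd_lemma}. That is paired not with an exponential-decay/supermartingale step but with the \emph{upper} bound $\int_i^\infty\E[score(t,\Psi_t)]\,dt\le L\cdot a(i)$ of Lemma~\ref{Lemma: upper_bd_lemma} (this is where the $\ln(Q/\eta)$ factor enters, via the harmonic-type sum in Lemma~\ref{Lemma:harmonic}). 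A double integral in $i$ and $t$ then relates $\int a$ to $\int o$ directly, with no intermediate claim about individual states $\psi$.
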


 This is an asymptotic improvement over the $(1+ \ln (Q/\eta))^2$-approximation bound from \cite{GK17} and the $(1+\ln (\frac{nQc_{max}}\eta))$-approximation bound from \cite{EKM21};   the maximum item cost $c_{max}$ can even be exponentially larger than $Q$. Our bound is the best possible, up to the constant factor of $4$, because the set cover problem is a special case of \asc (where $Q$ is the number of elements to cover and $\eta=1$). \cite{DinurS14} showed that,  assuming $P\ne NP$, one cannot obtain a   better than $\ln (Q)$ approximation ratio for set cover.

As a consequence, we obtain the first $O(\ln Q)$-approximation algorithm for the viral marketing application mentioned earlier. We also obtain an improved bound for the optimal decision tree problem with uniform priors. See Section~\ref{sec:appln} for details.

In fact, we obtain Theorem~\ref{Theorem: THE_thm} as a special case of the following general result on minimizing the $p^{th}$ moment of the coverage cost.

\begin{thm}\label{Theorem: THE_thmp}
Consider any instance of minimum cost adaptive-submodular cover, where the utility function $f$ is monotone, coverable and adaptive-submodular w.r.t. the probability distribution $q(\cdot)$. Suppose that there is some value $\eta>0$ such that $f(\psi)>Q-\eta$ implies $f(\psi)=Q$ for all partial realizations $\psi\sse E\times \Omega$. Then, for every $p\ge 1$, the $p^{th}$ moment   cost of the greedy policy is
\[\cp(\grd)  \leq (p+1)^{p+1}\cdot (1+\ln (Q/\eta))^p \cdot  \cp (\opt_p) ,\]
where $\opt_p$ denotes the optimal policy for the $p^{th}$ moment objective.
\end{thm}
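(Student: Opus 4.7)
My plan is to adapt the classical expected-cost analysis of the adaptive-submodular greedy policy (which corresponds to the $p=1$ case) to the $p^{\mathrm{th}}$-moment setting by proving a tail bound on the random cost $C(\grd,\Phi)$ and then integrating it. Writing $M := \cp(\opt_p)^{1/p}$ and $L := 1+\ln(Q/\eta)$, I aim to establish the geometric tail bound
\begin{equation*}
\Pr[C(\grd,\Phi) > k\cdot B] \;\leq\; \rho^{\,k} \qquad \text{for every integer } k\ge 0,
\end{equation*}
with $B := (p+1)\,L\,M$ and $\rho := (p+1)^{-p}$. Substituting this into the layer-cake identity $\cp(\grd) = p\int_0^\infty t^{p-1}\Pr[C(\grd,\Phi)>t]\,dt$ and telescoping the resulting sum reduces the claim to the elementary inequality $(1-\rho)\sum_{k\ge 0}(k+1)^p\rho^{\,k}\le p+1$, which yields $\cp(\grd) \leq (p+1)\,B^{\,p} = (p+1)^{p+1}\,L^{\,p}\,\cp(\opt_p)$; at $p=1$ the inequality is tight and recovers Theorem~\ref{Theorem: THE_thm}.

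The tail bound is driven by the standard adaptive-submodular progress inequality: at any non-terminal partial realization $\psi$ reachable by $\grd$ and any benchmark policy $\pi^*$ starting from $\psi$, adaptive submodularity and monotonicity of $f$ give
\begin{equation*}
Q - f(\psi) \;\leq\; \E\!\left[f(\Psi(\pi^*,\infty))-f(\psi) \,\middle|\, \psi\preccurlyeq\Phi\right] \;\leq\; \max_{e\in E\setminus\dom(\psi)}\frac{\Delta(e|\psi)}{c_e}\cdot \E\!\left[C(\pi^*,\Phi) \,\middle|\, \psi\preccurlyeq\Phi\right],
\end{equation*}
so the greedy rule~(\ref{eq:greedy-choice}) implies that $\grd$'s expected $f$-gain per unit cost at $\psi$ is at least $(Q-f(\psi))/\E[C(\pi^*,\Phi)\mid \psi\preccurlyeq\Phi]$. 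The new ingredient for the higher-moment setting is the choice of benchmark: Markov's inequality applied to $C(\opt_p,\Phi)^p$ gives $\Pr[C(\opt_p,\Phi) > (p+1)M] \le \rho$, so the \emph{truncation} of $\opt_p$ at cost $(p+1)M$ is a witness policy with deterministic cost $\le (p+1)M$ and expected final residual $\le Q\rho$. Plugging this truncated witness into the halving-phase refinement of the $p=1$ argument of \cite{GK17,EKM21} (with $L$ halvings of the residual, each costing $\le 4(p+1)M$ in expected greedy cost) yields the base case $\Pr[C(\grd,\Phi) > B] \le \rho$.

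I expect the iteration needed to upgrade the base case to the full geometric tail bound for $k\ge 2$ to be the main technical obstacle. Conditioning on $\{C(\grd,\Phi)>(k{-}1)B\}$ puts $\grd$ in a random partial realization $\psi_{k-1}$ in which greedy has not yet covered, and the conditional remaining problem is again an instance of min-cost adaptive-submodular cover; for the induction to close I need a benchmark from $\psi_{k-1}$ whose \emph{conditional} $p^{\mathrm{th}}$-moment is still bounded by $M^p$. In the $p=1$ case no such iteration is needed because the benchmark $\opt$ covers with probability $1$, so after one epoch the residual is deterministically $<\eta$. Here the truncated benchmark only covers with conditional probability $1-\rho$, and the natural candidate for the continuation benchmark---namely $\opt_p$ restricted to the subtree of realizations consistent with $\psi_{k-1}$---requires a careful conditional Markov argument combined with the adaptive-submodularity of $f$ restricted to this subtree. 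Establishing this conditional $p^{\mathrm{th}}$-moment bound is the heart of the proof and the crucial step that distinguishes Theorem~\ref{Theorem: THE_thmp} from the $p=1$ analysis.
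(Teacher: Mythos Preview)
Your plan is genuinely different from the paper's argument, and the step you yourself flag as ``the main technical obstacle'' is a real gap that I do not see how to close.

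\textbf{The iteration does not close.} To push the tail bound from $k-1$ to $k$ you need, for each surviving greedy state $\psi_{k-1}$, a benchmark whose \emph{conditional} $p^{\mathrm{th}}$ moment is still $\le M^p$. But $\opt_p$ restricted to $\{\psi_{k-1}\preccurlyeq\Phi\}$ need not satisfy this: the only general bound is $\E[C(\opt_p,\Phi)^p\mid\psi_{k-1}\preccurlyeq\Phi]\le M^p/\Pr[\psi_{k-1}\preccurlyeq\Phi]$, and since greedy and $\opt_p$ follow different decision paths, conditioning on greedy having survived a long time can concentrate all the mass on realizations where $\opt_p$ is expensive. Averaging over the surviving $\psi_{k-1}$'s does not help either, because the states $\psi_{k-1}$ are determined by greedy's trajectory, not $\opt_p$'s, so there is no law-of-total-expectation identity that recovers $M^p$. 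The difficulty is not technical bookkeeping; the inductive invariant you want is simply false pointwise and does not average out in an obvious way.

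\textbf{The base case is also not secured.} You invoke the ``halving-phase refinement of the $p=1$ argument of \cite{GK17,EKM21}'', but that argument in the general adaptive-submodular setting is exactly the step that was erroneous in \cite{GK11}; the corrected bound in \cite{GK17} is $(1+\ln(Q/\eta))^2$, and \cite{EKM21} carries extra $\ln(nc_{\max})$ factors. Moreover, your truncated witness does not always cover, so the first inequality in your displayed progress bound, $Q-f(\psi)\le \E[f(\Psi(\pi^*,\infty))-f(\psi)\mid\psi\preccurlyeq\Phi]$, fails: the right-hand side is only $\ge (1-\rho)(Q-f(\psi))$ in general. You would need to track the residual more carefully, and even then you arrive at an expected-residual bound, not the probability bound $\Pr[C(\grd,\Phi)>B]\le\rho$ that you assert.

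\textbf{How the paper avoids all of this.} The paper never conditions on greedy's history and never iterates. It works directly with the unconditional non-completion probabilities $a(t)=\Pr[C(\grd,\Phi)>t]$ and $o(t)=\Pr[C(\opt_p,\Phi)>t]$ at \emph{every} time $t$, and introduces the potential gain $G_i=\int_i^\infty \E[score(t,\Psi_t)]\,dt$. An upper bound $G_i\le L\,a(i)$ comes from a pathwise harmonic-sum estimate; a lower bound $G_i\ge L\beta\int_i^\infty \frac{a(t)-o(t/(L\beta))}{t}\,dt$ comes from comparing the greedy score at each time $t$ to the cost-$t/(L\beta)$ truncation of $\opt_p$ (adaptive submodularity enters only here, at a single fixed time, with no conditioning on greedy's past). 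Multiplying the resulting inequality by $i^{p-1}$, integrating over $i$, and swapping the order of integration yields a linear inequality between $\cp(\grd)$ and $\cp(\opt_p)$; setting $\beta=p+1$ gives the stated ratio. The key is that the averaging you were hoping to perform across surviving greedy states is replaced by a double integral over time, which is where the $p^{\mathrm{th}}$-moment structure enters cleanly via the layer-cake formula.
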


This result is also best possible for any $p\ge 1$, up to the leading constant of $(p+1)^{p+1}$. We emphasize that  the greedy policy $\pi$ is   oblivious  to the choice of $p$: so Theorem~\ref{Theorem: THE_thmp} provides a universal algorithm that achieves the stated approximation ratio for {\em every} value of  $p$. We are not aware of any previous result for minimizing higher moments, even for  special cases such as (independent) stochastic submodular cover or optimal decision tree. 

Our proof technique  is  very different from prior works on adaptive submodular cover~\cite{GK17,EKM21,HellersteinKP21}. The approaches in these papers  were tailored to bound the expected cost, and seem difficult to extend to higher moments $p>1$. We start by expressing the $p^{th}$ moment objective of any policy as an appropriate integral of  ``non-completion probabilities'' over all times. Then, we relate the non-completion probabilities in the  greedy policy (at any time $t$) to that in the optimal policy (at a scaled time $\frac{t}\alpha$). 
 In order to establish this relation, we consider the integral of the   greedy criterion value over each suffix $(t,\infty)$ of time,  and prove lower and upper bounds on this quantity. Finally, we relate the $p^{th}$ moment objectives of greedy and the optimal policy by analyzing a double integral over all suffixes.   
The high-level approach of using non-completion probabilities has been used earlier for minimizing the expected cost in a number of stochastic covering problems, including the independent special case of \asc \cite{INZ12}. We refine and improve this approach by (i) utilizing  non-completion probabilities at all times (not just  power-of-two times) and (ii) using a stronger upper bound on the greedy criterion value.  Moreover, we extend this approach to $p^{th}$ moment objectives (prior work only looked at expected cost).  
 
Additionally, our algorithm and analysis extend in a straightforward manner, to the setting 
with 
multiple adaptive-submodular functions, where the objective is the sum of $p^{th}$ moments of the  ``cover times'' of all the functions. We obtain the same approximation ratio even for this more general problem. 
 In fact, the multiple \asc problem with $Q=\eta=1$ and an  expected cost objective (i.e., $p=1$) generalizes the min-sum set cover problem~\cite{FLT04}, which is NP-hard to approximate better than factor $4$. The  constant factor $(p+1)^{p+1}$ in our approximation ratio for this problem is also $4$, which implies  that our bound is  tight in this case.

Finally, we provide computational results of our algorithm on real-world instances of optimal decision tree. We compare the performance of our algorithm on $p^{th}$  moment objectives for   $p=1,2,3$ to lower-bounds that we obtain (via Huffman coding). Our algorithm performs very well on the instances tested. 

\subsection{Related Work}
Adaptive submodularity was introduced by \cite{GK11}, where they considered both the maximum-coverage  and the minimum-cost-cover problems. They showed that the greedy policy is a $(1-\frac1e)$ approximation for maximum coverage, where the goal is to maximize the expected value of  an adaptive-submodular function subject to a cardinality constraint. They also claimed that the greedy policy is a 
$(1+ \ln (Q/\eta))$ approximation for min-cost cover of an adaptive-submodular function. However, this result had an error \cite{NS17}, and a corrected proof \cite{GK17} only provides a double-logarithmic  $(1+ \ln (Q/\eta))^2$ approximation. Recently, \cite{EKM21} obtained a single-logarithmic approximation bound of $(1+ \ln (\frac{n Q c_{max}}\eta))$. However, this bound depends additionally on the number of items $n$ and their maximum cost $c_{max}$. Our result shows that the greedy policy is indeed an $O(\ln (Q/\eta))$ approximation. As noted earlier, our definition of \asc is simpler and slightly more restrictive than the original one in \cite{GK11}, although most applications of adaptive-submodularity do  satisfy our definition.

The special case of adaptive-submodularity where the random variables are independent across items, has also been studied extensively. For the maximum-coverage version, \cite{AsadpourN16} obtained  a $(1-\frac1e)$-approximation algorithm via a ``non adaptive'' policy (that fixes a subset of items to select upfront). Subsequent work \cite{GuptaNS17,BSZ19,AdamczykSW16} obtained   constant factor approximation algorithms for a variety of constraints (beyond just cardinality). The minimum-cost cover problem (called {\em stochastic submodular cover}) was studied in \cite{AAK19,HellersteinKP21,HellersteinK18,GhugeGN21,INZ12}. In particular, an $O(\ln (Q/\eta))$ approximation algorithm follows from \cite{INZ12}, and 
recently \cite{HellersteinKP21} proved that the greedy policy has a $(1+ \ln (Q/\eta))$ approximation guarantee. The latter guarantee is the best possible, even up to the constant factor: this also matches the best approximation ratio for the {\em deterministic} submodular cover problem~\cite{W82} and its special  case of set cover~\cite{DinurS14}. The papers \cite{AAK19,GhugeGN21} studied stochastic submodular cover under limited ``rounds of adaptivity'', and obtained smooth taredoffs between the approximation ratio and the number of rounds.

The (deterministic) submodular cover problem with multiple functions was introduced  in \cite{AzarG11}, where they obtained  an $O(\ln (Q/\eta))$ approximation algorithm for minimizing the sum of cover times. Subsequently,  \cite{INZ12} studied the {\em stochastic} submodular cover problem with multiple functions (which involved independent items), and obtained  an $O(\ln (Q/\eta))$ approximation algorithm. The analysis in our paper is similar, at a high level, to the analysis in \cite{INZ12}, which also relied on  the non-completion probabilities. However, we handle the more general adaptive-submodular setting (where items may be correlated), and we obtain a much better constant factor and extend the techniques to minimizing higher moments of the cost.

Minimizing higher moments of the covering cost has been studied previously in the  {\em deterministic} setting by \cite{GolovinGKT08}. Specifically, they considered the {\em $L_p$  set cover} problem, where given a collection of sets (items in our setting) and elements, the goal is to find a sequence of sets so as to minimize the total $p^{th}$ moment of   cover times over all elements. \cite{GolovinGKT08} showed that the greedy algorithm  for $L_p$  set cover achieves an approximation ratio of $(p+1)^{p+1}$ for each $p\ge 1$ simultaneously.  We note that $L_p$  set cover   is a special case of the multiple \asc problem studied in our paper, where  $Q=\eta=1$ and each function corresponds to an element in set cover. So, our approximation ratio for multiple \asc in this special case matches the best known bound for $L_p$  set cover. Moreover, \cite{GolovinGKT08} showed that for any fixed value of $p$, there is no approximation ratio better than $\Omega(p)^p$ for $L_p$  set cover, unless $NP\sse DTIME(n^{\log\log n})$. So, the leading factor $(p+1)^{p+1}$ in our bound for multiple \asc is nearly the best possible, for each $p$.   As noted before, ours is the first paper to consider higher moment objectives in the {\em stochastic} setting, even for special cases of \asc such as stochastic submodular cover and optimal decision tree. Our proof technique is also very different from that in \cite{GolovinGKT08}.

A different (scenario based) model for correlations in adaptive submodular cover was studied in \cite{NavidiKN20,GHKL16}. Here, the utility function $f$ is just required to be submodular (not adaptive-submodular), but the algorithm requires an explicit description of the probability distribution $p(\cdot)$.  In particular, \cite{NavidiKN20} obtained a greedy-style policy with approximation ratio $O(\ln(mQ/\eta))$ where $m$ is the support-size of  distribution $p(\cdot)$, and $Q$ and $\eta$ are as before.  
We believe that our proof technique may be useful in extending the expected-cost   results in \cite{NavidiKN20} to higher moments  and in  improving the constant factor in their approximation ratio.

\section{Analyzing the Greedy Policy} 

In this section, we prove our main result (Theorem~\ref{Theorem: THE_thmp}), which bounds the $p^{th}$ moment of the greedy policy cost. Note that  setting $p=1$ in Theorem~\ref{Theorem: THE_thmp} implies the bound on expected cost (Theorem~\ref{Theorem: THE_thm}). We first show how to re-write the $p^{th}$ moment objective in terms of  ``non-completion'' probabilities. Then, we relate the non-completion probabilities in the greedy and optimal  policies by analyzing the integral of the ``greedy criterion value'' \eqref{eq:greedy-choice} over time.

\subsection{Non-completion Probabilities and the Moment Objective}
Our analysis is based on relating the ``non-completion'' probabilities at different times in the greedy policy $\grd$ and the optimal policy $\opt$. We first define these quantities formally.

\begin{defn}[Non-completion probabilities] For any time $t\ge 0$, let 
    \begin{equation*}
        o(t)\,:= \, \Pr[\opt \mbox{ does not terminate by time }t] = \Pr[C(\opt, \Phi) > t] = \Pr[f(\Psi(\opt , t)) < Q] .
    \end{equation*}
    Similarly, for any $t\ge 0$, let
    \begin{equation*}
        a(t)\,:= \, \Pr[\grd \mbox{ does not terminate by time }  t] = \Pr[C(\grd, \Phi) > t] = \Pr[f(\Psi(\grd ,  t)) < Q] .
    \end{equation*}
    \label{definition:noncomp}
    \end{defn}
    
See Figure~\ref{fig:example_function_o} for an example of the non-completion probabilities $o(t)$. Clearly,   $o(t)$ and $a(t)$ are non-increasing functions of $t$. Moreover, $o(0)=a(0)=1$ and we have $o(t)=a(t)=0$ for all $t\ge \sum_{e\in E} c_e$ (this is because any policy would have selected all items by this time).

It is easy to see that the expected cost of any policy is exactly the integral of the non-completion probabilities over time. That is, 
\begin{equation*}
c_{exp}(\opt)=\int_{0}^{\infty}  \Pr[ \opt \mbox{ does not terminate by time }t] dt = \int_{0}^{\infty}   o(t) dt.
\end{equation*}
\begin{equation*}
c_{exp}(\grd)=\int_{0}^{\infty}  \Pr[ \grd \mbox{ does not terminate by time }t] dt = \int_{0}^{\infty}   a(t) dt.
\end{equation*}

\begin{figure}[h]
    \centering
\begin{tikzpicture}
\begin{axis}[
            xmin = 0, xmax = 10,
            ymin = 0, ymax = 1.2,
            xtick distance = 1,
            ytick distance = 1,
            grid = both,
            minor tick num = 1,
            major grid style = {lightgray},
            minor grid style = {lightgray!25},
            width = 0.5\textwidth,
            height = 0.3\textwidth,
            xlabel = {time $t$},
            ylabel = {$o(t)$},
            legend cell align = {left},
        ]
\addplot[ultra thick][domain=0:1] {1};
\addplot[ultra thick][domain=1:2] {0.7};
\addplot[ultra thick][domain=2:3.2] {0.6};
\addplot[ultra thick][domain=3.2:6.5] {0.3};
\addplot[ultra thick][domain=6.5:8] {0.1};
\addplot[ultra thick][domain=8:10] {0};
\addplot[mark=*] coordinates {(0,1)};
\addplot[mark=*,fill=white] coordinates {(1,1)};
\addplot[mark=*] coordinates {(1,0.7)};
\addplot[mark=*,fill=white] coordinates {(2,0.7)};
\addplot[mark=*] coordinates {(2,0.6)};
\addplot[mark=*,fill=white] coordinates {(3.2,0.6)};
\addplot[mark=*] coordinates {(3.2,0.3)};
\addplot[mark=*,fill=white] coordinates {(6.5,0.3)};
\addplot[mark=*] coordinates {(6.5,0.1)};
\addplot[mark=*,fill=white] coordinates {(8,0.1)};
\addplot[mark=*] coordinates {(8,0)};
\end{axis}
\end{tikzpicture}
    \caption{Graph of a simple $o(\cdot)$ function.
    \label{fig:example_function_o} }
\end{figure}
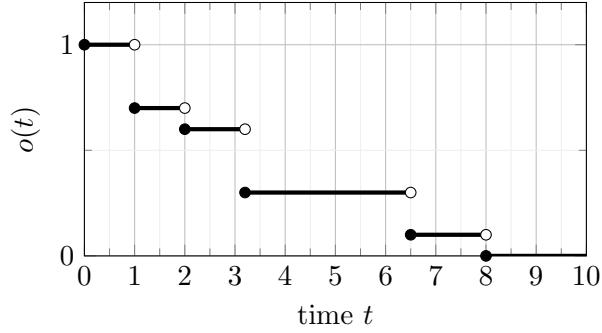

It turns out that we can also express the $p^{th}$ moment objective of any policy as a suitable integral of the non-completion probabilities. This relies on the following result. 
 
\begin{lem}
\label{lem:momentp}
Suppose that $X$ is a non-negative, discrete and bounded random variable.    For any $p \ge 1$, the $p^{th}$ moment of $X$  is
    \[  \E[X^p] = p\int_{t=0}^{\infty}{t^{p-1} \cdot \Pr[X>t]   }dt. \]
\end{lem}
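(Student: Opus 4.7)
The plan is to use the standard ``layer cake'' identity, which expresses $X^p$ as an integral of an indicator and then swaps the order of expectation and integration by Tonelli's theorem (which applies freely since the integrand is non-negative).

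First, I would observe that for any fixed value $x \ge 0$ of $X$, the fundamental theorem of calculus gives
\[
x^p \;=\; \int_0^x p\, t^{p-1}\, dt \;=\; \int_0^\infty p\, t^{p-1} \cdot \mathbf{1}[t < x]\, dt.
\]
Substituting $x = X$ yields the pointwise identity $X^p = \int_0^\infty p\, t^{p-1} \mathbf{1}[t < X] \, dt$. Taking expectations on both sides,
\[
\E[X^p] \;=\; \E\!\left[ \int_0^\infty p\, t^{p-1}\, \mathbf{1}[t < X]\, dt \right].
\]

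Next, because $X$ is non-negative, discrete, and bounded, the integrand $p t^{p-1} \mathbf{1}[t < X]$ is non-negative and supported on a bounded set (in fact, on $[0, \max X]$ almost surely). Tonelli's theorem therefore permits exchanging the expectation with the integral over $t$:
\[
\E[X^p] \;=\; \int_0^\infty p\, t^{p-1} \cdot \E\!\left[\mathbf{1}[t < X]\right] dt \;=\; p \int_0^\infty t^{p-1}\, \Pr[X > t]\, dt,
\]
which is exactly the claimed identity.

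There is no real obstacle here: boundedness and non-negativity of $X$ ensure the integral is finite and that Tonelli applies. The discreteness assumption in the lemma is not actually needed for the identity itself; it is presumably stated this way because the non-completion probabilities $a(t)$ and $o(t)$ arising in the application are step functions, so the integrals on the right-hand side reduce to finite sums and all measurability issues are trivial. If one prefers an elementary argument avoiding Tonelli, one can also prove it directly by summing $x^p = \sum_{k} x_k^p \Pr[X = x_k]$ and writing each $x_k^p = p \int_0^{x_k} t^{p-1} dt$, then interchanging the finite sum with the integral.
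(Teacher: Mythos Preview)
Your proof is correct and essentially identical to the paper's: both use the identity $x^p = p\int_0^x t^{p-1}\,dt$ and then interchange expectation with the integral (the paper phrases it as a sum over the finite support and invokes Fubini, which is exactly the alternative you sketch at the end). The only cosmetic difference is that you write the inner integral via the indicator $\mathbf{1}[t<X]$ and cite Tonelli, whereas the paper sums over the finite support directly.
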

\begin{proof}
Let $V$ denote the (finite) support of r.v. $X$. Note that $x^p = p\cdot \int_{0}^x t^{p-1} dt$ for all $x\ge 0$. So, we obtain
\begin{align*}
     \E[X^p] &  = \sum_{x\in V} \Pr[X=x]\cdot x^p = p \sum_{x\in V} \Pr[X=x] \int_{t=0}^{x} {t^{p-1}dt} \\
     & = p\int_{0}^{\infty}{t^{p-1}  \sum_{x\in V: x > t}{\Pr[ X = x]}dt} = p\int_{0}^{\infty} t^{p-1} \cdot \Pr[X > t] dt.
\end{align*}
The third equality above switches the order of the integral and summation using Fubini's theorem.
\end{proof}

Now, we apply Lemma~\ref{lem:momentp} with random variable $X=C(\sigma,\Phi)$ where $\sigma$ is the  optimal policy and $\Phi$ denotes the r.v.s in the \asc instance.  Note that the cost $C(\sigma,\Phi)$ is non-negative, discrete and bounded. Using the fact that $o(t) = \Pr[C(\opt, \Phi) > t] $,  we  obtain:

\begin{equation}\label{eq:cp-opt}
\cp(\opt)\,\,=\,\, \E_{\Phi}\left[C(\opt, \Phi)^p\right] \,\,=\,\,   p \int_{0}^{\infty}  t^{p-1} \cdot     o(t) dt.
\end{equation}
Similarly, applying Lemma~\ref{lem:momentp} to the r.v. $C(\grd,\Phi)$ for the  greedy  policy $\grd$, 
\begin{equation}\label{eq:cp-grd}
\cp(\grd)\,\,=\,\, \E_{\Phi}\left[C(\grd, \Phi)^p\right] \,\,=\,\,   p \int_{0}^{\infty}  t^{p-1} \cdot   a(t) dt.
\end{equation}

\subsection{Using the Greedy Criterion}

Our analysis of $\grd$ relies on tracking the ``greedy criterion value'' defined in \eqref{eq:greedy-choice}. 
 The following definition formalizes this.  
     
\begin{defn}[Greedy score] 
For $t\ge 0$ and any partial realization $\psi$ observed at time $t$, define 

\begin{equation*}
score(t,\psi) := 
    \begin{cases}
        \frac{\Delta(e|\psi)}{c_e[Q-f(\psi)]}, & \substack{\text{where $e$ is the item being selected in $\grd$ at time $t$} \\ \text{and } \psi 
        \text{ was observed just before selecting } e.}\\
        0, & \text{if no item is being selected  in $\grd$ at time $t$ when }\psi \text{ was  observed}.
    \end{cases}
\end{equation*}
    
Note that conditioned on $\psi$, the item $e$ being selected in $\pi$ at time $t$ is deterministic.   
    \end{defn}
    
The expression for \textit{score} above is exactly the greedy criterion in \eqref{eq:greedy-choice}. Moreover, the score may increase and decrease over time: see Figure~\ref{fig:example_score} for an example.

In order to reduce notation, for any time $t$, we use $\Psi_t:=\Psi(\pi,t)$ to denote the (random) partial realization observed by the greedy policy $\pi$ at time $t$; recall that this only includes items that have been completely selected by time $t$.

\begin{defn}
 (Potential gain)
 For any time  $i\ge 0$, its potential {\bf gain} is the expected total score accumulated after time $i$,  
    \[G_i := \int_{t=i}^{\infty} \E \left[ score(t,\Psi_t)\right] dt.\] 
    \end{defn}

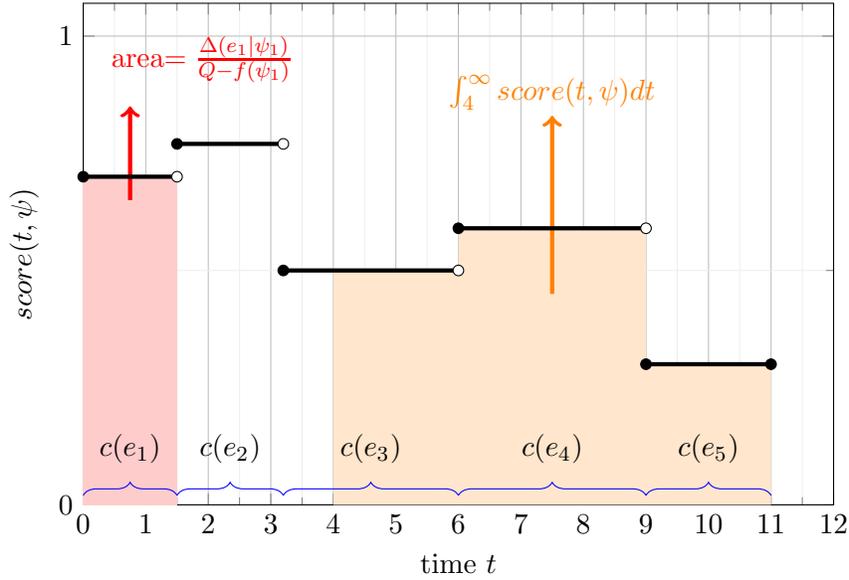
\begin{figure}[h]
    \centering
   
\begin{tikzpicture}
\begin{axis}[
            xmin = 0, xmax = 12,
            ymin = 0, ymax = 1.07,
            xtick distance = 1,
            ytick distance = 1,
            grid = both,
            minor tick num = 1,
            major grid style = {lightgray},
            minor grid style = {lightgray!25},
            width = 0.7\textwidth,
            height = 0.5\textwidth,
            xlabel = {time $t$},
            ylabel = {$score(t,\psi)$},
            legend cell align = {left},
        ]
        
\fill[red!20] (0,0) -- (0,0.7) -- (1.5, 0.7) -- (1.5,0) -- cycle;
\draw[->,red, ultra thick](0.75,0.65)--(0.75,0.85);
\node[red] at (1.9,0.95) {area$=\frac{\Delta(e_1|\psi_1)}{Q-f(\psi_1)}$};
\fill[orange!20] (4,0) -- (4,0.5) -- (6, 0.5) -- (6,0.59) -- (9,0.59) -- (9,0.3) -- (11,0.3) -- (11,0) -- cycle;
\draw[->,orange, ultra thick](7.5,0.45)--(7.5,0.83);
\node[orange, ultra thick] at (7.5,0.88) {$\int_{4  }^{\infty}score(t,\psi)dt$ };
\node[orange, ultra thick] at (7.5,0.78) {};
\addplot[mark=*] coordinates {(0,0.7)};
\addplot[ultra thick][domain=0:1.5] {0.7};
\addplot[mark=*,fill=white] coordinates {(1.5,0.7)};
\addplot[mark=*] coordinates {(1.5,0.77)};
\addplot[ultra thick][domain=1.5:3.2] {0.77};
\addplot[mark=*,fill=white] coordinates {(3.2,0.77)}; 
\addplot[mark=*] coordinates {(3.2,0.5)};             
\addplot[ultra thick][domain=3.2:6] {0.5};
\addplot[mark=*,fill=white] coordinates {(6,0.5)};
\addplot[mark=*] coordinates {(6,0.59)};
\addplot[ultra thick][domain=6:9] {0.59};
\addplot[mark=*,fill=white] coordinates {(9,0.59)};
\addplot[ultra thick][domain=9:11] {0.3}; 
\addplot[mark=*] coordinates {(11,0.3)}; 
\addplot[mark=*] coordinates {(9,0.3)};    
\node[orange, ultra thick] at (9,0.88) {};
\draw [decorate, decoration = {brace, amplitude=5pt}, blue] (0,0.02) --  (1.5,0.02);
\node[] at (0.75,0.12) {$c(e_1)$};
\draw [decorate, decoration = {brace, amplitude=5pt}, blue] (1.5,0.02) --  (3.2,0.02);
\node[] at (2.35,0.12) {$c(e_2)$};
\draw [decorate, decoration = {brace, amplitude=5pt}, blue] (3.2,0.02) --  (6,0.02);
\node[] at (4.6,0.12) {$c(e_3)$};
\draw [decorate, decoration = {brace, amplitude=5pt}, blue] (6,0.02) --  (9,0.02);
\node[] at (7.5,0.12) {$c(e_4)$};
\draw [decorate, decoration = {brace, amplitude=5pt}, blue] (9,0.02) --  (11,0.02);
\node[] at (10,0.12) {$c(e_5)$};
\end{axis}
\end{tikzpicture}
    \caption{Graph of a simple $score(t,\psi)$ for illustration. $e_1, e_2,...$ are greedy selections and $\psi_i$ is the partial realization just before selecting $e_i$. }
    \label{fig:example_score}
\end{figure}
    
The key part of the analysis lies in upper and lower bounding the potential gain starting from all time points. The upper-bound relates to the non-completion probabilities $a(\cdot)$ in $\grd$  and the lower-bound relates to the non-completion probabilities $o(\cdot)$ in $\opt$. Putting the upper and lower bound together allows us to relate $a(\cdot)$ and $o(\cdot)$, which in turn will be used   to upper-bound $\cp(\grd)$ in terms of $\cp(\opt)$. 
For the upper-bound (Lemma~\ref{Lemma: upper_bd_lemma}), we view gains as the expected values over full-realizations, which allows us to write the total (conditional) gain as a harmonic series. For the lower-bound (Lemma~\ref{Lemma: low_bd_lemma}), we view the gain as an integral of expected contributions over time and  prove a lower bound for each time step using  the optimal policy $\opt$ and the adaptive-submodularity property.

Below, let $L:=1+\ln (Q/\eta)$ and $\beta> 1$ be some constant value (that will be fixed later).

 \begin{lem}\label{Lemma: upper_bd_lemma}
For any $i\ge 0$, the potential gain at time $i$ is 
\[G_i = \int_{i}^{\infty} \E \left[ score(t,\Psi_t)\right] dt \leq L\cdot a(i).\] 
\end{lem}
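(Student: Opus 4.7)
The plan is to decompose $G_i$ across the random sequence of items selected by $\grd$, replace expected marginal benefits by realized ones via the tower property, and then collapse the resulting per-realization sum with a standard harmonic-sum estimate. For a fixed realization $\phi$, let $e_1,\ldots,e_{k}$ be the items $\grd$ selects under $\phi$, let $\psi_{j-1}$ be the partial realization just before $e_j$ is chosen (with $\psi_0=\emptyset$ and $f(\psi_k)=Q$), and let $t_j=\sum_{h\le j} c_{e_h}$ be the time at which $e_j$ finishes being selected. On the interval $[t_{j-1},t_j)$ the score equals the constant $\Delta(e_j|\psi_{j-1})/[c_{e_j}(Q-f(\psi_{j-1}))]$, which integrates to $\Delta(e_j|\psi_{j-1})/(Q-f(\psi_{j-1}))$ over the full interval. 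I will extend the (possibly partial) interval containing $i$ to its full length to obtain the pathwise bound
\[ G_i(\phi) \;\le\; \sum_{j=1}^{k(\phi)} \mathbf{1}[t_j(\phi)>i]\cdot\frac{\Delta(e_j|\psi_{j-1}(\phi))}{Q-f(\psi_{j-1}(\phi))}. \]

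Next I take expectation over $\phi$ and apply the tower property to each summand. The key observation is that in $\grd$ every item, its cost, and the running time $t_j$ are deterministic functions of the partial realization observed just before the selection, so both the indicator $\mathbf{1}[t_j>i]$ and the factor $1/(Q-f(\psi_{j-1}))$ are measurable with respect to $\psi_{j-1}$. Combined with $\Delta(e_j|\psi_{j-1})=\E[f(\psi_j)-f(\psi_{j-1})\mid\psi_{j-1}\preccurlyeq\Phi]$, iterated conditioning then yields
\[ \E[G_i] \;\le\; \E\Bigl[\sum_{j=1}^{k}\mathbf{1}[t_j>i]\cdot\frac{f(\psi_j)-f(\psi_{j-1})}{Q-f(\psi_{j-1})}\Bigr], \]
converting the expected marginal benefit into a genuinely realized increase in $f$.

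The last step is a pathwise harmonic bound. On any $\phi$ with $t_k(\phi)\le i$ the summand vanishes; otherwise let $j^*$ be the smallest index with $t_{j^*}>i$ and set $x_j=f(\psi_j)$, so that $0=x_0\le\cdots\le x_k=Q$. The standard inequality $(x_j-x_{j-1})/(Q-x_{j-1})\le\ln[(Q-x_{j-1})/(Q-x_j)]$ applied for $j=j^*,\ldots,k-1$ telescopes to $\ln[(Q-x_{j^*-1})/(Q-x_{k-1})]\le\ln(Q/\eta)$, where the last inequality uses the $\eta$-gap hypothesis (since $\grd$ has not yet terminated at step $k-1$ we must have $Q-x_{k-1}\ge\eta$); the $j=k$ term is at most $1$ because its numerator cannot exceed its denominator. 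Summing gives a pathwise bound of $L=1+\ln(Q/\eta)$ on the event $\{t_k>i\}$, and therefore $\E[G_i]\le L\cdot\Pr[t_k>i]=L\cdot a(i)$. The main obstacle will be cleanly justifying the interchange in the middle step—tracking which quantities are functions of $\psi_{j-1}$ and which involve $\Phi_{e_j}$—since this is where the adaptive nature of the problem bites; the telescoping at the end is a routine exercise.
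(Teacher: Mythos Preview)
Your proposal is correct and follows essentially the same approach as the paper. The paper introduces an auxiliary per-realization score $S(t,\phi)$ (using the realized increment $f(\psi\cup(e,\phi_e))-f(\psi)$ in place of $\Delta(e|\psi)$), applies the tower property at the level of $score(t,\psi)=\E[S(t,\Phi)\mid\psi\preccurlyeq\Phi]$ to get $G_i=\E\bigl[\int_i^\infty S(t,\Phi)\,dt\bigr]$, and then bounds $\int_0^\infty S(t,\phi)\,dt\le L$ pathwise via the same harmonic estimate you use (their Lemma~\ref{Lemma:harmonic}); your version simply swaps the order, first summing the per-item contributions of $score$ and then invoking the tower property term-by-term, and you keep the sum over $j\ge j^*$ rather than extending to all $j$ as the paper does---a cosmetic tightening that does not affect the final bound.
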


\begin{proof}
We start by re-expressing the score and gain in terms of the full realization. For any time $t\geq 0$ and  full  realization $\phi$, let

\begin{equation*}
S(t,\phi) := 
    \begin{cases}
        \frac{f(\psi\cup (e,\phi_e))-f(\psi)}{c_e[Q-f(\psi)]}, & \substack{ \text{where $e$ is the item being selected in $\pi$ at time $t$ under } \phi, \\   \text{and $\psi \preccurlyeq \phi$ is the partial realization just before selecting } e.}\\
        0, & \text{if no item is being selected  in $\grd$ at time $t$ under }\phi.
    \end{cases}
\end{equation*}
    
Then, for any  $t\ge 0$ and any partial realization $\psi$ observed at time $t$,   
$$score(t,\psi) = \E_\Phi[S(t,\Phi) | \psi\preccurlyeq \Phi].$$

This uses the definition of $\Delta(e|\psi)$ and the fact that conditioned on $\psi$, the item $e$ (being selected at time $t$) is fixed.   Hence, for   any time-point $k\ge 0$, its potential gain
$$G_k = \int_{k}^{\infty} \E \left[ score(t,\Psi_t)\right] dt = \int_{k}^{\infty} \E_{\Psi_t} \left[ \E_\Phi[S(t,\Phi)|\Psi_t\preccurlyeq\Phi]\right] dt = \int_{k}^{\infty} \E \left[ S(t,\Phi)\right] dt.$$

Now, fix time $i\ge 0$ and condition on any (full) realization $\phi$.
\begin{enumerate}
    \item[] Case 1: suppose that $\grd$ under $\phi$ terminates before ($<$) time $ i$. Then, $S(t,\phi)=0$ for all $t \ge i$, and so:  
    \begin{equation*}
    \begin{aligned}
        G_i(\phi) = \int_{i}^{\infty}S(t,\phi)dt=0
    \end{aligned}
\end{equation*}

    \item[] Case 2: suppose that $\grd$ under $\phi$ terminates after ($\ge$) time $i$.
\begin{equation*}
    \begin{aligned}
        G_i(\phi)=\int_{i}^{\infty}S(t,\phi)dt=\int_{i}^{\infty}S(t,\phi)dt\leq \int_{0}^{\infty}S(t,\phi)dt\leq  L, 
    \end{aligned}
\end{equation*}
where the last inequality is by Lemma~\ref{Lemma:harmonic} (proved in Appendix~\ref{app}).    
\end{enumerate}

Note that case 2 above happens exactly with probability $a(i)$. So, 
\begin{equation*}
    \begin{aligned}
        G_i=\E_\Phi\left[G_i(\Phi) \, | \, \mbox{case 2 occurs under }\Phi\right]\cdot \Pr[\mbox{case 2 occurs}]\le  L\cdot a(i),    \end{aligned}
\end{equation*}
which completes the proof.\end{proof}

\begin{lem}\label{Lemma: low_bd_lemma}
For any time $t\ge 0$, 
\[\E[score(t, \Psi_t)]\geq \frac{a(t)-o(t/L\beta)}{t/(L\beta)}.\]
Hence, $G_i\ge     \int_{i}^{\infty}\frac{a(t)-o(t/(L\beta))}{t/(L\beta)}dt $ for each time $i\ge 0$.
\end{lem}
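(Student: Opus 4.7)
The plan is to compare the greedy state at time $t$ to a cost-truncated version of $\opt$ and to extract the score bound from a standard adaptive-submodular ``greedy versus optimal'' comparison. Fix $t>0$, set $\tau := t/(L\beta)$, let $\opt_\tau$ be the cost-$\tau$ truncation of $\opt$, and let $A(\Phi)\sse E$ denote the (random) set of items selected by $\opt_\tau$ under realization $\Phi$; by construction $\sum_{e\in A(\Phi)} c_e \le \tau$ deterministically.

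The core step is a pointwise score bound. Fix any partial realization $\psi_0$ with $f(\psi_0)<Q$ that greedy might observe at time $t$, and let $e_1^\Phi,e_2^\Phi,\ldots$ be the items chosen by $\opt_\tau$ under $\Phi$. A standard iterated-expectations argument using adaptive submodularity---at each step $i$ condition on the history so far (a partial realization extending $\psi_0$), apply Definition~\ref{Definition:adsub} to bound the conditional expected marginal gain of $e_i^\Phi$ by $\Delta(e_i^\Phi\mid\psi_0)$, and telescope over $i$ using $\sum_i c_{e_i^\Phi}\le\tau$---gives
\[\E_{\Phi\mid\psi_0}\bigl[f(\psi_0\cup A(\Phi))\bigr]-f(\psi_0)\;\le\;\tau\cdot \max_{e\notin\dom(\psi_0)}\frac{\Delta(e\mid\psi_0)}{c_e}.\]
On the other hand, monotonicity and coverability imply that whenever $\opt_\tau$ actually terminates under $\Phi$ we have $f(A(\Phi))=Q$ and hence $f(\psi_0\cup A(\Phi))=Q$, so
\[\E_{\Phi\mid\psi_0}\bigl[f(\psi_0\cup A(\Phi))\bigr]-f(\psi_0)\;\ge\;(Q-f(\psi_0))\cdot \Pr[\opt_\tau\text{ terminates}\mid \psi_0].\]
Combining these two inequalities and dividing by $c_e(Q-f(\psi_0))$ at the maximizing $e$ yields the pointwise bound $score(t,\psi_0)\ge \Pr[\opt_\tau\text{ terminates}\mid\psi_0]/\tau$.

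To finish, I would take expectation over $\Psi_t$, coupling greedy and $\opt_\tau$ through the common $\Phi$, and use $score(t,\Psi_t)=0$ whenever $f(\Psi_t)=Q$. The law of total probability then gives
\[\E[score(t,\Psi_t)]\;\ge\;\frac{\Pr[\opt_\tau\text{ terminates}\,\wedge\, f(\Psi_t)<Q]}{\tau}\;\ge\;\frac{a(t)-o(\tau)}{\tau},\]
where the last step is the union bound $\Pr[A\cap B]\ge\Pr[B]-\Pr[\bar A]$ with $\Pr[f(\Psi_t)<Q]=a(t)$ and $\Pr[\opt_\tau\text{ fails to terminate}]=o(\tau)$. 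The ``hence'' claim follows by integrating this inequality over $t\in[i,\infty)$.

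The main obstacle will be the first step: the items $e_i^\Phi$ and their costs $c_{e_i^\Phi}$ are themselves random functions of $\Phi$, so the iterated-expectations bookkeeping must be carried out carefully to replace each conditional marginal gain by $\Delta(e_i^\Phi\mid\psi_0)$ while still telescoping under the random cost budget $\sum_i c_{e_i^\Phi}\le\tau$. Everything downstream---monotonicity, coverability, and the union-bound coupling of greedy's and $\opt_\tau$'s outcomes---is essentially automatic.
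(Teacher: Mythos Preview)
Your proposal is correct and follows essentially the same route as the paper. The paper also fixes the cost-$\tau$ truncation $\bar\sigma$ of $\opt$, defines the quantity $Z=\E_\Phi\bigl[\mathds{1}(\psi\preccurlyeq\Phi)\cdot\frac{f(\psi\cup P_\infty)-f(\psi)}{Q-f(\psi)}\bigr]$ (which is just your conditional gain rescaled), upper-bounds $Z$ by $score(t,\psi)\cdot\tau\cdot\Pr[\psi\preccurlyeq\Phi]$ via adaptive submodularity and the greedy rule, lower-bounds $Z$ by $\Pr[(\psi\preccurlyeq\Phi)\wedge(\bar\sigma\text{ covers }f)]$ via monotonicity, and finishes with the same union bound; the ``iterated-expectations bookkeeping'' you flag as the main obstacle is exactly what the paper carries out explicitly by summing over pairs $(\psi',e)$ with indicator variables $X_{e,\psi,\psi'}$ tracking which item $\bar\sigma$ selects at each observed history.
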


\begin{proof} 
  Note that the first statement in the lemma immediately implies the second statement. 
Indeed,
$$G_i = \int_{ i}^{\infty} \E[score(t, \Psi_t)] dt \ge    \int_{ i}^{\infty}\frac{a(t)-o(t/(L\beta))}{t/(L\beta)}dt. $$

We now prove the first statement. 
Henceforth, fix time $t\ge 0$.

\paragraph{Truncated optimal policy}
Let $\topt$ denote the cost of policy $\opt$ truncated at time $ t/(L\beta) $. Note that the total cost of selected items in $\topt$ is always at most $ t/(L \beta)$. However, $\topt$ may not fully cover the utility function $f$ (so it is not a feasible policy for min-cost adaptive submodular cover). We define the following random quantities associated with policy $\topt$:
\begin{equation*}
    \begin{aligned}
    I_k  &:= \text{set of first $k$ items selected by }\topt, \mbox{ for   }k=0,1,\cdots . \\
    I_\infty &:= \text{set of {\em all} items selected by the end of }\topt.\\
    P_k &:=\{(e,\Phi_e):e\in I_k\} \text{, i.e. partial realization of the first $k$ items selected by }\topt , \mbox{ for }k=0,1,\cdots .\\
    P_{\infty}&:=\{(e,\Phi_e):e\in I_\infty\} \text{, i.e. partial realization observed by the end of }\topt. 
    \end{aligned}
\end{equation*}

Note that $\topt$ covers $f$ exactly when $f(P_{\infty})=Q$. Moreover, by definition of the function $o(\cdot)$, we have $\Pr[\topt\text{ covers } f]=1-o(t/(L \beta) )$.
\paragraph{Conditioning on partial realizations in greedy} 
Let $\psi$ be any partial realization corresponding to $\Psi_t$ with $f(\psi)<Q$. In other words, (i) $\psi$ is the partial realization observed at time $t$  in some execution of policy $\grd$, and (ii) the policy has not terminated (under realization $\psi$) by time $t$. Let $R(\grd,t)$ denote the collection of such partial realizations. Note that the partial realizations in $R(\grd,t)$  are mutually disjoint, and the total probability of these partial realizations equals the probability that $\grd$ does not terminate by time $t$.
 We will show that:

\begin{equation}
\label{eq:LB-score-psi}
\Pr[\psi \preccurlyeq \Phi ] \cdot score(t,\psi) \ge \frac{L \beta }{t}\cdot \Pr[(\psi\preccurlyeq\Phi)\wedge(\topt\text{ covers } f)], \qquad \forall \psi\in R(\grd,i).
\end{equation}

We first complete the proof of the lemma assuming \eqref{eq:LB-score-psi}. 

\begin{eqnarray}
        \E[score(t,\Psi_t)]& \ge  &   \sum_{\psi\in  R(\grd,t)}p(\psi)\cdot score(t,\psi) = \sum_{\psi\in  R(\grd,t)}\Pr[\psi \preccurlyeq \Phi ] \cdot score(t,\psi) \notag \\
         &\geq &\frac{L \beta}{t}\sum_{\psi\in  R(\grd,t)}\Pr[(\psi\preccurlyeq\Phi)\wedge(\topt\text{ covers } f)]\label{eq:score-LB-1}\\
        &=&\frac{L \beta}{t}\Pr\left[(\grd\text{ doesn't terminate by time }t  )\wedge(\topt\text{ covers } f)\right] \label{eq:score-LB-2}\\
        &\ge& \frac{L \beta }{t}\left( \Pr[\grd\text{ doesn't terminate by time }t  ] \,-\,  \Pr[\topt\text{ does not cover } f]\right)\label{eq:score-LB-3}\\
        &=&\frac{L \beta}{t}\cdot \left(  a(t)-o(t/( \beta L))\right) \label{eq:score-LB-4}
\end{eqnarray}

Inequality~\eqref{eq:score-LB-1} is by \eqref{eq:LB-score-psi}. The equality in \eqref{eq:score-LB-2} uses the definition of 
 $R(\grd,t)$.  Inequality~\eqref{eq:score-LB-3} is by a union bound. Equation \eqref{eq:score-LB-4} is by definition of the functions $a(\cdot)$ and $o(\cdot)$.

\paragraph{Proof of \eqref{eq:LB-score-psi}} Henceforth, fix any partial realization $\psi\in R(\grd,t)$. Our proof  relies on the following quantity:
\begin{align}
    Z:=\E_{\Phi}\left[\mathds{1}(\psi\preccurlyeq\Phi)\cdot \frac{f(\psi\cup P_{\infty})-f(\psi)}{Q-f(\psi)}\right]
\end{align}

In other words, this is the expected increase in policy $\topt$'s function value (relative to the ``remaining'' target   $Q-f(\psi)$) when restricted to (full) realizations $\Phi$ that agree with partial realization $\psi$. 

For any partial realization $\psi'$ such that $\psi \preccurlyeq\psi'$ and item $e\not\in \dom(\psi')$, let $X_{e,\psi,\psi'}$ denote the indicator r.v. that policy $\topt$ selects item $e$ at some point when its observed realizations are precisely $(\psi'\setminus \psi)\cup \chi$ where $\chi\sse \psi$. That is, $X_{e,\psi,\psi'}=1$ if policy $\topt$ selects $e$ at a point where (i) all items in $\dom(\psi'\setminus \psi)$ have been selected and their realization is $\psi'\setminus \psi$, (ii) no item in $E\setminus \dom(\psi')$ has been selected, and (iii) if any item in $\dom(\psi)$ has been selected then its realization agrees with $\psi$. Note that conditioned on $\psi' \preccurlyeq\Phi$, 
$X_{e,\psi,\psi'}$ is a deterministic value: the realizations of items $\dom(\psi')$ are fixed by $\psi'$ and if any item in $E\setminus \dom(\psi')$ is  selected (before $e$) then $X_{e,\psi,\psi'}=0$ irrespective of its realization. 

We can write $Z$ as a sum of increments as follows:
    \begin{align}
        Z&=\frac{1}{Q-f(\psi)}\E_{\Phi}\left[\mathds{1}(\psi\preccurlyeq\Phi)\cdot \sum_{k\geq 1}[f(\psi\cup P_k)-f(\psi\cup P_{k-1})]\right]\nonumber\\
        &=\frac{1}{Q-f(\psi)}\sum_{\psi':\psi\preccurlyeq\psi'}\sum_{e\notin \dom(\psi')}\E_{\Phi}\left[\mathds{1}(\psi'\preccurlyeq\Phi)\cdot X_{e,\psi,\psi'}\cdot [f(\psi'\cup(e,\Phi_e))-f(\psi')]\right]\nonumber\\
        &=\frac{1}{Q-f(\psi)}\sum_{\psi':\psi\preccurlyeq\psi'}\sum_{e\notin \dom(\psi')}\Pr[\psi'\preccurlyeq\Phi \wedge X_{e,\psi,\psi'}=1 ]\cdot \E_{\Phi}[f(\psi'\cup(e,\Phi_e))-f(\psi')\,|\,(\psi'\preccurlyeq\Phi)\wedge(X_{e,\psi,\psi'}=1)]\nonumber\\
        &=\frac{1}{Q-f(\psi)}\sum_{\psi':\psi\preccurlyeq\psi'}\sum_{e\notin \dom(\psi')}\Pr[\psi'\preccurlyeq\Phi \wedge X_{e,\psi,\psi'}=1 ]\cdot \E_{\Phi}[f(\psi'\cup(e,\Phi_e))-f(\psi')\,|\, \psi'\preccurlyeq\Phi ]\label{eq:score-LB-condn}\\
        &=\frac{1}{Q-f(\psi)}\sum_{\psi':\psi\preccurlyeq\psi'}\sum_{e\notin \dom(\psi')}\Pr[\psi'\preccurlyeq\Phi \wedge X_{e,\psi,\psi'}=1 ]\cdot \Delta(e|\psi')\nonumber
        \end{align}
        
        \begin{align}
        &\leq \frac{1}{Q-f(\psi)}\sum_{\psi':\psi\preccurlyeq\psi'}\sum_{e\notin \dom(\psi')}\Pr[\psi'\preccurlyeq\Phi \wedge X_{e,\psi,\psi'}=1 ]\cdot \Delta(e|\psi)\label{eq:score-LB-submod}\\
        &=\sum_{\psi':\psi\preccurlyeq\psi'}\sum_{e\notin \dom(\psi')}\Pr[\psi'\preccurlyeq\Phi \wedge X_{e,\psi,\psi'}=1 ]\cdot c_e\cdot \frac{\Delta(e|\psi)}{c_e(Q-f(\psi))}\nonumber\\
        &\leq \sum_{\psi':\psi\preccurlyeq\psi'}\sum_{e\notin \dom(\psi')}\Pr[\psi'\preccurlyeq\Phi \wedge X_{e,\psi,\psi'}=1 ]\cdot c_e\cdot  score(t,\psi)\label{eq:score-LB-greedy}\\
        &=  score(t,\psi)\sum_{\psi':\psi\preccurlyeq\psi'}\sum_{e\notin \dom(\psi')}c_e\cdot \Pr[\psi'\preccurlyeq\Phi \wedge X_{e,\psi,\psi'}=1 ]\nonumber\\
        &=  score(t,\psi)\cdot \sum_{e\in E\setminus \dom(\psi)} c_e\cdot \sum_{\substack{\psi':\psi\preccurlyeq\psi' \\  \dom(\psi')\not\ni e}}  \Pr[\psi'\preccurlyeq\Phi \wedge X_{e,\psi,\psi'}=1 ]\notag \\
        &=  score(t,\psi)\cdot \sum_{e\in E\setminus \dom(\psi)} c_e\cdot \Pr[\psi\preccurlyeq\Phi \wedge e\in I_\infty] \label{eq:score-LB-cost}\\
                &\le   score(t,\psi)\cdot \E_{\Phi}\left[\mathds{1}(\psi\preccurlyeq\Phi)\cdot \sum_{e\in I_{\infty}}c_e\right] \notag\\
        & \leq    score(t,\psi)\cdot (t/(L\beta)) \cdot \E_{\Phi}[\mathds{1}(\psi\preccurlyeq\Phi)]  \,\, =\,\,  score(t,\psi)\cdot (t/(L\beta))\cdot \Pr[\psi\preccurlyeq\Phi]   \label{eq:score-LB-Z} .
    \end{align}

    The equality  \eqref{eq:score-LB-condn} uses the fact that $X_{e,\psi,\psi'}$ is deterministic when conditioned on $\psi' \preccurlyeq\Phi$. Inequality \eqref{eq:score-LB-submod} is by adaptive submodularity. \eqref{eq:score-LB-greedy} is by the greedy selection criterion. The inequality in \eqref{eq:score-LB-Z} uses the fact that the total cost of $\topt$'s selections is always bounded above by $t/(L\beta)$. 
    Equation~\eqref{eq:score-LB-cost} uses the definition of $I_\infty$ (all selected items in $\topt$) and the following identity:
\[  \sum_{\substack{\psi':\psi\preccurlyeq\psi' \\  \dom(\psi')\not\ni e}}  \mathds{1}(\psi'\preccurlyeq\Phi)\cdot  X_{e,\psi,\psi'} = \mathds{1}(\psi\preccurlyeq\Phi \wedge e\in I_\infty) , \qquad \forall e\in E\setminus \dom(\psi).\]

To see this, condition on any full realization $\phi$. If $\psi \not \preccurlyeq\phi$ then both the left-hand-side ($LHS$) and right-hand-side ($RHS$) are $0$. If $\psi  \preccurlyeq\phi$ and $e$ is not selected by $\topt$ under $\phi$, then  again $LHS = RHS=0$. If $\psi  \preccurlyeq\phi$ and $e$ is  selected by $\topt$ under $\phi$, then $RHS=1$ and $LHS$ is the sum of $X_{e,\psi,\psi'}$ over $\psi'$ such that $\psi\preccurlyeq \psi'\preccurlyeq \phi$ and $e\not\in \dom(\psi')$. In this case, $X_{e,\psi,\psi'}=1$ for exactly one such  partial realization $\psi'$, namely   $\psi'=\psi \cup \kappa$ where $\kappa\preccurlyeq \phi$ is the partial realization immediately before $e$ is selected. So, $LHS=RHS$ in all cases. 

Note that whenever $\topt$ covers $f$, we have $f(P_\infty)=Q$. Combined with the monotone property of $f$, we have $f(\psi\cup P_\infty)=Q$ whenever $\topt$ covers $f$. So, we have:
$$Z\ge   \Pr[(\psi\preccurlyeq\Phi)\wedge(\topt\text{ covers } f)].$$

Combining the above inequality with \eqref{eq:score-LB-Z} finishes the proof of \eqref{eq:LB-score-psi}.
\end{proof}

\subsection{Wrapping Up}
We are now ready to complete the proof of Theorem~\ref{Theorem: THE_thmp}. 
Using Lemmas \ref{Lemma: upper_bd_lemma} and  \ref{Lemma: low_bd_lemma}, we get:
\begin{equation*}
    a(i) L\,\,\geq\,\, G_i \,\,\geq\,\, L\beta \int_{i}^{\infty}\frac{a(t)-o(t/(L \beta ))}{t}dt , \qquad \forall i\ge 0. 
\end{equation*}
Multiplying this inequality by $i^{p-1}$ and integrating over all $i\ge 0$, we obtain:
\begin{equation}
    \int_{0}^{\infty}i^{p-1}a(i)   di \,\,\geq\,\,    \beta \cdot \int_{i=0}^{\infty}i^{p-1}  \int_{t=i}^{\infty}\frac{a(t)-o(t/(L \beta ))}{t} \,dt\, di.  \label{eq:G-LUBp}
\end{equation}
Now, using  \eqref{eq:cp-grd}, the $p^{th}$ moment of the greedy cost is
 \begin{align}
      \cp(\grd) &= p \cdot     \int_{0}^{\infty}i^{p-1}a(i)  di \,\,\ge\,\, \beta p\cdot   \int_{i= 0}^{\infty}i^{p-1}\int_{t=i}^{\infty}\frac{a(t)-o(t/(L \beta))}{t} \,dt\, di \label{eq:p2.12}\\
      & = \beta p\cdot \int_{t=0}^{\infty}\frac{a(t)-o(t/(L \beta))}{t} \int_{i=0}^{t} i^{p-1}\,di \,dt  \,\,   =\,\,  \beta\cdot \int_{ 0}^{\infty}\frac{a(t)-o(t/L\beta)}{t} \cdot t^p dt   \label{eq:fb2}\\
      & = \beta\cdot \int_{ 0}^{\infty} t^{p-1}\cdot a(t) dt \, -\, \beta\cdot \int_{0}^{\infty}t^{p-1} \cdot o(t /L\beta)  dt  \nonumber \\
      & = \frac{\beta}p \cdot \cp(\grd)  \, -\, \beta\cdot \int_{0}^{\infty}t^{p-1} \cdot o(t /L\beta)  dt   \label{eq:a} \\
      &= \frac{\beta}p \cdot \cp(\grd)  \, -\,  \beta (L\beta)^{p} \cdot  \int_{y=0}^{\infty} y^{p-1}o(y) dy \,\,=\,\, \frac{\beta}p \cdot \cp(\grd)  \, -\,  \frac{\beta}p \cdot (L\beta)^{p} \cdot \cp(\opt) \label{eq:o}.
\end{align}
The inequality in  \eqref{eq:p2.12} is by \eqref{eq:G-LUBp}. The first equality in \eqref{eq:fb2} is by interchanging the order of the integrals (by Fubini's theorem).  Equality \eqref{eq:a} uses \eqref{eq:cp-grd} for $\cp(\grd)$. The first equality in  \eqref{eq:o} is by a change of variable $y=\frac{t}{L\beta}$ in the integral, and the last equality uses \eqref{eq:cp-opt} for $\cp(\opt)$.
  
 It now follows that 
$$\left(\frac{\beta}{p} -1\right) \cdot \cp(\grd) \quad \le \quad \frac{\beta}p \cdot (L\beta)^{p} \cdot   \cp(\opt).$$
In order to minimize the approximation ratio, we choose $\beta=p+1$ (note that this is only used in the analysis), which implies:
$$\cp(\grd) \quad \le \quad (p+1)^{p+1}\cdot L^p \cdot \cp(\opt).$$

 This completes the proof of Theorem~\ref{Theorem: THE_thmp}. Setting $p=1$, we also obtain   Theorem~\ref{Theorem: THE_thm}.

\section{Applications}\label{sec:appln}
Here, we provide some concrete applications of our framework. These applications were already discussed in \cite{GK17}, but as noted in Section~\ref{sec:Definition}, the function definition in \asc is slightly more restrictive than the framework in \cite{GK17}.

\paragraph{Stochastic Submodular Cover.} In this problem, there are $n$ stochastic items (for example, corresponding to sensors). Each item $e$ can be in one of many  ``states'', and this state is observed only after selecting item $e$. E.g.,  the state of a sensor indicates the extent to which it is working. The states of different items are independent. There is a utility function $\hat{f}: 2^{E\times \Omega}\rightarrow \R_{\ge 0}$, where  $E$ is the set of items and $\Omega$ the set of states. It is assumed that $\hat{f}$ is monotone and submodular. For e.g., $\hat{f}$ quantifies the information gained from a set of sensors having arbitrary states. Each item $e$ is also associated with a cost $c_e$. Given a quota $Q$, the goal is to select items sequentially to achieve utility at least $Q$, at the minimum expected cost.  We assume that the quota $Q$ can always be achieved by selecting adequately many items, i.e., $\hat{f}(\{(e,\phi_e):e\in E\})\ge Q$ for all possible states $\{\phi_e\in \Omega\}_{e\in E}$ for the items. This is a special case of \asc, where the items $E$ and states (outcomes) $\Omega$ remain the same. We define a new utility function $f(\psi) = \min\left\{ \hat{f}(\psi) , Q\right\}$ for all $\psi\sse E\times \Omega$.   Note that $Q$ is the maximal value of function $f$ and this value is achieved under every possible (full) realization. Moreover, $f$ is also monotone and submodular. Clearly, the monotonicity property (Definition~\ref{Definition:mon}) holds. The adaptive-submodularity property also holds because the items are independent. Indeed, for any partial realizations $\psi\preccurlyeq \psi'$ and $e\in E\setminus\dom(\psi')$,
\begin{eqnarray*}
\Delta(e|\psi) &=& \sum_{\omega\in \Omega}\Pr[\Phi_e=\omega|\psi\preccurlyeq\Phi]\cdot \left( f(\psi\cup (e,\omega) )-f(\psi)\right) \\
&=& \sum_{\omega\in \Omega}\Pr[\Phi_e=\omega]\cdot \left( f(\psi\cup (e,\omega) )-f(\psi)\right)  \\
&\ge &\sum_{\omega\in \Omega}\Pr[\Phi_e=\omega]\cdot \left( f(\psi'\cup (e,\omega) )-f(\psi')\right)  \\
&= & \sum_{\omega\in \Omega}\Pr[\Phi_e=\omega|\psi'\preccurlyeq\Phi]\cdot \left( f(\psi'\cup (e,\omega) )-f(\psi')\right) = \Delta(e|\psi'). 
\end{eqnarray*}

In particular, when $\hat{f}$ is integer-valued, Theorem~\ref{Theorem: THE_thm} implies a $4(1+\ln Q)$-approximation algorithm. We note that \cite{HellersteinKP21} obtained a $(1+\ln Q)$-approximation ratio, using a different analysis. The latter bound is the best possible, including the constant factor, as the problem generalizes set cover. However, our approach is more versatile and also provides a $(p+1)^{p+1}\cdot (1+\ln Q)^p$ approximation  bound for the $p^{th}$ moment objective. Our result is the first approximation algorithm for higher moments ($p>1$).

\paragraph{Adaptive Viral Marketing.} This problem is defined on a directed graph $G=(V,A)$ representing a social network~\cite{KempeKT15}. Each node $v\in V$ represents a user. Each arc $(u,v)\in A$ is associated with a random variable $X_{uv}\in \{0,1\}$. The r.v. $X_{uv}=1$  if $u$ will influence $v$ (assuming $u$ itself is influenced); we also say that arc $(u,v)$ is {\em active} in this case. The r.v.s $X_{uv}$ are independent, and we are given the means $\E[X_{uv}]=p_{uv}$ for all $(u,v)\in A$. When a node $u$ is activated/influenced, all arcs $(u,v)$ out of $u$ are observed and if $X_{uv}=1$ then $v$ is also activated. This process then continues on $u$'s neighbors to their neighbors and so on, until no new node is activated. We consider the ``full feedback'' model, where after activating a node $w$, we observe the $X_{uv}$ r.v.s on all arcs $(u,v)$ such that $u$ is reachable from $w$ via a path of active arcs. Further, each node $v$ has a cost $c_v$ corresponding to activating node $v$ {\em directly}, e.g. by providing some promotional offer. Note that there is no cost incurred on $v$ if  it is activated (indirectly) due to a  neighbor $u$ with $X_{uv}=1$.  Given a quota $Q$, the goal is to activate at least $Q$ nodes at the minimum expected cost. 

To model this as \asc, the items $E=V$ are all nodes in $G$. We add  self-loops $A_o=\{(v,v) : v\in V\}$ that represent whether a node is activated directly. So, the new set of  arcs is $A'=A\cup A_o$. The outcome $\Phi_w$ of any node $w\in V$ is represented by a function $\phi_w:A'\rightarrow \{0,1,?\}$ where $\phi_w((w,w))=1$, $\phi_w((v,v))=0$ for all $v\in V\setminus w$, and for any $(u,v)\in A$:
\begin{itemize}
    \item $\phi_w(u,v)=1$ if there is a $w-u$ path of active arcs and $X_{uv}=1$ (i.e., $(u,v)$ is active).

\item $\phi_w(u,v)=0$ if there is a $w-u$ path of active arcs and $X_{uv}=0$ (i.e., $(u,v)$ is not active).

\item $\phi_w(u,v)=?$ if there is no $w-u$ path of active arcs (so, the status of  $(u,v)$ is unknown).

\end{itemize}
Let $\Omega$ denote the collection of all such functions: this represents the outcome space. Note that $\Phi_w$ depends on the entire network (and not just node $w$). So, the r.v.s $\{\Phi_w\}_{w\in V}$ may be highly correlated.
Observe that $\Phi_w$ is exactly the feedback obtained when node $w$ is activated directly (by incurring cost $c_w$) at any point in a policy. Define function 
$\bar{f}:2^{E\times \Omega}\rightarrow \R_{\ge 0}$ as:
\begin{equation}
    \label{eq:f-influence} \bar{f}(\psi) = \sum_{v\in V} \min\left\{ \sum_{u:(u,v)\in A'} |\{ w\in \dom(\psi) : \psi_w(u,v) =1\}| \, ,\, 1\right\}.\end{equation}
$\bar{f}$ is a sum of set-coverage functions, which is monotone and submodular. 
Then, utility function $f:2^{E\times \Omega}\rightarrow \R_{\ge 0}$ is $f(\psi)=\min\{\bar{f}(\psi), Q\}$. Function $f$ is clearly monotone (Definition~\ref{Definition:mon}). The adaptive-submodularity property also holds: see Theorem~19 in \cite{GK17}.  

Hence, Theorem~\ref{Theorem: THE_thm} implies a $4(1+\ln Q)$-approximation algorithm for adaptive viral marketing. This is an improvement over previous  approximation ratios of $(1+\ln Q)^2$~\cite{GK17} and   $(1+\ln (nQc_{max}))$~\cite{EKM21}, where $n=|V|$ and $c_{max}$ is the maximum cost. We also obtain the first approximation algorithm  for the $p^{th}$ moment objective. 

\paragraph{Optimal Decision Tree (uniform prior).} In this problem, there are $m$ hypotheses $H$ and $n$ binary tests $E$. Each test $e\in E$  costs $c_e$, and has a positive outcome on some subset $T_e\sse H$ of hypotheses (its outcome is negative on the other hypotheses).\footnote{Our results also extend to the case of multiway tests with non-binary outcomes.} An unknown  hypothesis $h^*$ is drawn from $H$ uniformly at random. The goal is to identify $h^*$ by sequentially performing tests, at minimum expected cost.  This is a special case of \asc, where the items correspond to tests $E$ and the outcome space $\Omega=\{+,-\}$. The outcome $\Phi_e$ for any item $e$ is the test outcome under the (unknown) hypothesis $h^*$. \label{subsec:odt}
For any test $e\in E$, define subsets $S_{e,+} = H\setminus T_e$ and $S_{e,-} =T_e$, corresponding to the hypotheses that can be eliminated when we observe a positive or negative outcome on $e$. The utility function is
$$f(\psi) = \frac1{|H|} \cdot \big|\bigcup_{e\in \dom(\psi)} S_{e,\psi_e}\big|.$$
The quota $Q=1-\frac1{|H|}$. Achieving value $Q$ means that $|H|-1$ hypotheses have been eliminated, which implies that $h^*$ is identified.  
The function $f$ is again monotone and submodular. The monotonicity property (Definition~\ref{Definition:mon}) clearly holds. Moreover, using the fact that $h^*$ has a {\em uniform distribution}, it is known  that $f$ is adaptive-submodular: see Lemma~23 in \cite{GK17}.  

So, Theorem~\ref{Theorem: THE_thm} implies a $4(1+\ln (|H|-1))$-approximation algorithm for this problem; we use $Q$ as above and $\eta=\frac{1}{|H|}$. The previous-best bounds for this problem were $(1+\ln (|H|-1))^2$~\cite{GK17}, $12\cdot\ln|H|$~\cite{GuilloryB09} and   $(1+\ln (n|H|c_{max}))$~\cite{EKM21}. Again, our result is the first approximation algorithm for this problem under $p^{th}$ moment objectives.

We note that \cite{GK17} also obtained a $\left(\ln \frac{1}{p_{min}}\right)^2$-approximation for the optimal decision tree problem with arbitrary priors (where the distribution of $h^*$ is not uniform); here $p_{min}\le \frac1{|H|}$ is the minimum probability of any hypothesis. This uses a different utility function that falls outside our \asc framework (as our definition of  function  $f$ is more restrictive). Moreover, there are other approaches~\cite{GNR17,NavidiKN20} that provide a better $O(\ln |H|)$-approximation bound even for the problem with arbitrary priors. 

 \def\masc{\ensuremath{{\sf MASC}}\xspace}
\def\cov{\ensuremath{C}\xspace}
\def\ii{r}
\def\cpm{\widetilde{c}_p}

\section{Adaptive Submodular Cover  with Multiple  Functions} Here, we extend \asc to the setting of covering multiple adaptive-submodular functions. In the multiple adaptive-submodular cover (\masc) problem,  there is a set $E$  of  items and outcome space $\Omega$ as before. Each item $e\in E$ has a cost $c_e$; we will view this  cost as the item's {\em processing time}. Now, there are $k$ different utility functions $f_{\ii}:2^{E\times\Omega}\to\R_{\geq 0}$ for $\ii\in[k]$. We assume that each of these functions satisfies the monotonicity, coverability and adaptive-submodularity properties. We also assume, without loss of generality (by scaling), that the maximal value of each function $\{f_{\ii}\}_{\ii=1}^k$ is $Q$. As for the basic \asc problem, a solution to \masc  corresponds to a policy $\pi:2^{E\times \Omega}\rightarrow E$, that maps partial realizations to the next item to select. Given any policy $\pi$, the {\em cover time}  of function $f_{\ii}$ is defined as:
$$C_{\ii}(\pi) := \mbox{the earliest time $t$ such that } f_{\ii}(\Psi(\pi,t))=Q.$$
Recall that $\Psi(\pi,t)\sse E\times \Omega$  is the partial realization that has been observed by time $t$ in policy $\pi$. Note that the cover time is a random quantity.  
The expected cost objective in \masc  is  the expected total cover time of all functions, i.e., $\sum_{\ii=1}^k \E[\cov_{\ii}(\pi)]$. More generally, for any $p\ge 1$, the $p^{th}$  moment objective of policy $\pi$ is 
$\cpm(\pi) := \sum_{\ii=1}^k \E\left[\cov_{\ii}(\pi)^p\right]$. When we have just $k=1$ function, the  \masc problem reduces to \asc. 

\paragraph{Remark:} One might also consider an alternative multiple-function formulation where we are interested in the expected {\em maximum} cover time of the functions. This formulation can be directly solved as an instance of \asc where we use the single adaptive-submodular function  
   $g=\sum_{\ii=1}^k f_{\ii}$ with maximal value $Q'=kQ$.

\medskip

We extend the greedy policy for \asc to \masc, as described in Algorithm~\ref{alg:algo_multiple}. 
For each $\ii\in [k]$ and item $e\in E$, we use $\Delta_{\ii}(e|\psi)$ to denote the marginal benefit of $e$  under function $f_{\ii}$.  Notice that the greedy selection criterion here involves a sum of terms corresponding to each un-covered function. A similar greedy rule was used earlier in the (deterministic) submodular function ranking problem \cite{AzarG11} and in the independent special case of \masc in \cite{INZ12}.

\begin{algorithm}
\caption{Adaptive Greedy Policy $\grd$ for  \masc.\label{alg:algo_multiple}}
\begin{algorithmic}[1]
\State selected items $A \gets \emptyset$, observed realizations $\psi \gets \emptyset$
\While{there exists function $f_{\ii}$ with $f_{\ii}(\psi)<Q$}
\State select item
$$e^{*} = \argmax_{e\in E\setminus A}\,\, \frac{1}{c_e}\cdot\sum_{\ii\in [k]:f_{\ii}(\psi)<Q}\frac{\Delta_{\ii}(e|\psi)}{Q-f_{\ii}(\psi)},$$
and observe $\Phi_{e^*}$ 
\State add $e^*$ to the selected items, i.e., $A \gets A\cup\{e^{*}\}$ 
\State update $\psi \gets \psi\cup\{(e^{*}, \Phi_{e^{*}})\}$
\EndWhile
 \end{algorithmic}
\end{algorithm}

\begin{thm}\label{Theorem: multi_thmp}
Consider any instance of  adaptive-submodular cover  with $k$ utility functions, where each function $f_{\ii}:2^{E\times\Omega}\to\R_{\ge 0}$ is monotone, coverable and adaptive-submodular w.r.t. the same probability distribution $q(\cdot)$. Suppose that there is some value $\eta>0$ such that $f_{\ii}(\psi)>Q-\eta$ implies $f_{\ii}(\psi)=Q$ for all partial realizations $\psi\sse E\times \Omega$ and $\ii\in [k]$. Then, for every $p\ge 1$, the $p^{th}$ moment of the cost of the greedy policy is
\[\cpm(\grd)\leq (p+1)^{p+1}\cdot (1+\ln (Q/\eta))\cdot  \cpm(\opt_p),\]
where $\opt_p$ is the optimal \masc policy for the $p^{th}$ moment objective.\end{thm}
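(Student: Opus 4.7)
The plan is to follow the three-stage analysis of Theorem~\ref{Theorem: THE_thmp} (single-function \asc), with two natural extensions: track non-completion probabilities per function, and redefine the greedy score to match the selection rule of Algorithm~\ref{alg:algo_multiple}. For each $\ii\in[k]$ and $t\ge 0$, let $a_\ii(t):=\Pr[\cov_\ii(\grd)>t]$ and $o_\ii(t):=\Pr[\cov_\ii(\opt_p)>t]$, and set $A(t):=\sum_\ii a_\ii(t)$, $O(t):=\sum_\ii o_\ii(t)$. Applying Lemma~\ref{lem:momentp} separately to each $\cov_\ii(\grd)$ (resp.\ $\cov_\ii(\opt_p)$) and summing over $\ii$ gives $\cpm(\grd) = p\int_0^\infty t^{p-1}A(t)\,dt$ and $\cpm(\opt_p) = p\int_0^\infty t^{p-1}O(t)\,dt$. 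The greedy score at time $t$ with partial realization $\psi$ (where $\grd$ selects item $e$) becomes $score(t,\psi) := \frac{1}{c_e}\sum_{\ii:f_\ii(\psi)<Q}\frac{\Delta_\ii(e|\psi)}{Q-f_\ii(\psi)}$, and the potential gain $G_i:=\int_i^\infty\E[score(t,\Psi_t)]\,dt$ is defined as before.

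For the upper-bound analog of Lemma~\ref{Lemma: upper_bd_lemma}, I would decompose $score=\sum_\ii score_\ii$ where $score_\ii(t,\psi):=\mathds{1}(f_\ii(\psi)<Q)\cdot \frac{\Delta_\ii(e|\psi)}{c_e(Q-f_\ii(\psi))}$. Conditioning on a full realization $\phi$, the single-function harmonic bound (Lemma~\ref{Lemma:harmonic}) applied to $f_\ii$ gives $\int_0^\infty score_\ii(t,\phi)\,dt \le L:=1+\ln(Q/\eta)$. Because $score_\ii(t,\phi)=0$ once $f_\ii$ is covered (by monotonicity of $f_\ii$), the $\ii$-th contribution to $G_i(\phi)$ is at most $L\cdot\mathds{1}(f_\ii(\Psi(\grd,i))<Q)$. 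Summing over $\ii$ and taking expectations yields $G_i\le L\cdot A(i)$.

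For the lower-bound analog of Lemma~\ref{Lemma: low_bd_lemma}, fix $t>0$ and let $\topt$ be $\opt_p$ truncated at cost $t/(L\beta)$, with $P_\infty$ its observed partial realization. For any $\psi\in R(\grd,t)$ and any $\ii$ with $f_\ii(\psi)<Q$, define $Z_\ii:=\E_\Phi\bigl[\mathds{1}(\psi\preccurlyeq\Phi)\cdot\frac{f_\ii(\psi\cup P_\infty)-f_\ii(\psi)}{Q-f_\ii(\psi)}\bigr]$. Expanding telescopically and using adaptive submodularity of $f_\ii$ (as in \eqref{eq:score-LB-condn}--\eqref{eq:score-LB-submod}) yields $Z_\ii \le \sum_{\psi':\psi\preccurlyeq\psi'}\sum_{e\notin\dom(\psi')}\Pr[\psi'\preccurlyeq\Phi\wedge X_{e,\psi,\psi'}=1]\cdot c_e\cdot\frac{\Delta_\ii(e|\psi)}{c_e(Q-f_\ii(\psi))}$. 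The crucial new step is to sum over $\ii$ \emph{before} invoking the greedy rule: the inner quantity $\frac{1}{c_e}\sum_{\ii:f_\ii(\psi)<Q}\frac{\Delta_\ii(e|\psi)}{Q-f_\ii(\psi)}$ is exactly the aggregated greedy criterion value for item $e$, which is at most $score(t,\psi)$ by the selection rule of Algorithm~\ref{alg:algo_multiple}. Continuing as in Lemma~\ref{Lemma: low_bd_lemma}, I obtain $\sum_\ii Z_\ii \le score(t,\psi)\cdot(t/(L\beta))\cdot\Pr[\psi\preccurlyeq\Phi]$. On the other hand, coverability and monotonicity of each $f_\ii$ give $Z_\ii\ge\Pr[\psi\preccurlyeq\Phi\wedge\topt\text{ covers }f_\ii]$. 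Summing over $\psi\in R(\grd,t)$ and applying a union bound separately for each $\ii$ yields $\E[score(t,\Psi_t)]\ge \frac{L\beta}{t}\bigl(A(t)-O(t/(L\beta))\bigr)$.

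With the bounds $G_i\le L\cdot A(i)$ and $G_i\ge L\beta\int_i^\infty \frac{A(t)-O(t/(L\beta))}{t}\,dt$, the proof wraps up exactly as in Theorem~\ref{Theorem: THE_thmp}: multiply by $i^{p-1}$, integrate over $i\ge 0$, apply Fubini, change variables $y=t/(L\beta)$ in the $O$-integral, and set $\beta=p+1$ to get $\cpm(\grd)\le (p+1)^{p+1}\cdot L^p\cdot \cpm(\opt_p)$. The main obstacle is the ``sum-before-bound'' step in the lower bound: a naive per-function application of Lemma~\ref{Lemma: low_bd_lemma} would only give $\E[score(t,\Psi_t)]\ge\frac{L\beta}{t}(a_\ii(t)-o_\ii(t/(L\beta)))$ for each $\ii$ individually, and these cannot be combined without multi-counting. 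The aggregated form of Algorithm~\ref{alg:algo_multiple}'s greedy rule is precisely what legitimizes collapsing the per-function bounds into one involving $A(t)-O(t/(L\beta))$.
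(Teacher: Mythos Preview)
Your proposal is correct and follows essentially the same route as the paper: the paper defines the aggregated quantity $\Tilde{Z}=\sum_{\ii:f_\ii(\psi)<Q} Z_\ii$ directly (rather than summing your per-function $Z_\ii$'s afterward), but the ``sum-before-bound'' use of the aggregated greedy criterion is exactly the key step the paper uses in its Lemma~\ref{Lemma: low_bd_lemma_multi}, and the upper bound and wrap-up are identical. Note also that your final bound $(p+1)^{p+1}\cdot L^p$ matches the paper's \emph{proof} (and Theorem~\ref{Theorem: THE_thmp}); the exponent on $L$ in the displayed statement of Theorem~\ref{Theorem: multi_thmp} appears to be a typo.
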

In particular, setting $p=1$, we obtain a $4(1+\ln (Q/\eta))$ approximation algorithm for \masc with the expected cost objective.
The proof of Theorem~\ref{Theorem: multi_thmp} is a natural extension  of   Theorem \ref{Theorem: THE_thmp} for \asc. We use the same notations and definitions if not mentioned explicitly.

\begin{defn}[\masc non-completion probabilities] For any time $t\ge 0$ and $\ii\in [k]$, let 
    \begin{equation*}
    \begin{aligned}
    \oo_{\ii}(t)\,:&= \, \Pr[\opt \mbox{ does not cover } f_{\ii}\mbox{ by time }t]  = \Pr[\cov_{\ii}(\opt) >  t] =  \Pr\left[f_{\ii}(\Psi(\opt , t)) < Q\right].\\
    \aa_{\ii}(t)\,:&= \, \Pr[\grd \mbox{ does not cover } f_{\ii}\mbox{ by time } t]  = \Pr[\cov_{\ii}(\grd) >  t] =  \Pr\left[f_{\ii}(\Psi(\grd , t)) < Q\right].\\
    \end{aligned}
    \end{equation*}
Also, define  for any $t\ge 0$,  $\oo(t) :=\sum_{\ii=1}^{k} \oo_{\ii}(t)$ and $        \aa(t) :=\sum_{\ii=1}^{k}\aa_{\ii}(t)$.
\label{defn:masc}
    \end{defn}

For each $\ii\in [k]$, the functions   $\oo_\ii(t)$ and $\aa_\ii(t)$ are non-increasing functions of $t$; moreover, $\oo_\ii(0)=\aa_\ii(0)=1$ and  $\oo_\ii(t)=\aa_\ii(t)=0$ for all $t\ge \sum_{e\in E} c_e$ (this is because any policy would have selected all items by this time). So, the functions $\oo_\ii(t)$ and $\aa_\ii(t)$ 
 share the same properties as $o(t)$ and $a(t)$ for \asc.  As in \eqref{eq:cp-grd} and \eqref{eq:cp-opt}, we can express the $p^{th}$ moment objective in \masc as: 
\begin{equation*}
\cpm(\opt)=    p \int_{0}^{\infty}  t^{p-1} \cdot   \oo(t) dt, \quad \mbox{and} \quad \cpm(\grd) = p \int_{0}^{\infty}  t^{p-1} \cdot   \aa(t) dt.
\end{equation*}

\begin{defn}[\masc greedy score]
For any $t\ge 0$ and partial realization $\psi$ observed at time $t$, we define
    \begin{equation*}
\sscore(t,\psi) := \frac1{c_e}   \sum_{\ii\in [k]:f_{\ii}(\psi)<Q}\,\frac{\Delta_{\ii}(e|\psi)}{ Q-f_{\ii}(\psi)}\quad 
        \substack{\text{where $e$ is the item being selected in $\grd$ at time $t$ } \\
        \text{ and $\psi$ was observed just  before selecting } e}, 
    \end{equation*}    
and  $\sscore(t,\psi):=0$ if no item is being selected  in $\grd$ at time $t$ when $\psi$ was observed.
\end{defn}

\begin{defn}[\masc potential gain]
For any time   $i\ge 0$, its potential  gain  is    the expected total score accumulated after time $i$,  
    \[\Tilde{G}_{i} := \int_{i}^{\infty } \E \left[\sscore(t,\Psi_t) \right]dt \]
\end{defn}

The next two lemmas  lower and upper bound the potential gain. 
\begin{lem}\label{Lemma: upper_bd_lemma_multi}
For any $i\ge 0$
\[\Tilde{G}_i\leq \sum_{\ii\in [k]}L\cdot \aa_{\ii}(t)=L\cdot \aa(t).\]
\end{lem}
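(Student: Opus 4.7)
The plan is to reduce the multi-function bound to the single-function bound by decomposing the score additively over the $k$ functions. Specifically, for each $\ii \in [k]$, introduce a per-function score
\[
\sscore_{\ii}(t,\psi) := \frac{\Delta_{\ii}(e|\psi)}{c_e\bigl(Q-f_{\ii}(\psi)\bigr)}
\]
when $e$ is the item being selected in $\grd$ at time $t$ after observing $\psi$ and $f_{\ii}(\psi)<Q$, and $\sscore_{\ii}(t,\psi):=0$ otherwise. By construction $\sscore(t,\psi)=\sum_{\ii\in[k]}\sscore_{\ii}(t,\psi)$, so by linearity of expectation
\[
\Tilde{G}_i \;=\; \sum_{\ii=1}^{k} \int_{i}^{\infty}\E\bigl[\sscore_{\ii}(t,\Psi_t)\bigr]\,dt \;=:\; \sum_{\ii=1}^{k} \Tilde{G}_{i,\ii}.
\]
It then suffices to show that $\Tilde{G}_{i,\ii}\le L\cdot \aa_{\ii}(i)$ for each $\ii$, since summing recovers the claimed bound (with the understanding that the statement's $\aa_\ii(t)$ is $\aa_\ii(i)$, matching $\aa(i)=\sum_\ii \aa_\ii(i)$).

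For each fixed $\ii$, the plan is to repeat the argument of Lemma~\ref{Lemma: upper_bd_lemma} applied to the adaptive-submodular function $f_{\ii}$. Define, as in the single-function proof, the realization-level version
\[
S_{\ii}(t,\phi) := \frac{f_{\ii}(\psi\cup(e,\phi_e))-f_{\ii}(\psi)}{c_e\bigl(Q-f_{\ii}(\psi)\bigr)}
\]
for the item $e$ and partial realization $\psi\preccurlyeq\phi$ present at time $t$ (and $0$ whenever $f_{\ii}(\psi)=Q$ or no item is being processed). The identity $\sscore_{\ii}(t,\psi)=\E_\Phi[S_{\ii}(t,\Phi)\mid \psi\preccurlyeq\Phi]$ follows exactly as before from the definition of $\Delta_{\ii}(e|\psi)$, so $\Tilde{G}_{i,\ii}=\int_i^\infty \E[S_{\ii}(t,\Phi)]\,dt$.

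Now condition on any full realization $\phi$. Since the greedy increments for $f_\ii$ are zero once $f_\ii$ is covered, the integral $\int_0^\infty S_\ii(t,\phi)\,dt$ collapses to a sum over items selected while $f_\ii$ was still uncovered. Applying the harmonic bound (Lemma~\ref{Lemma:harmonic}, invoked in the proof of Lemma~\ref{Lemma: upper_bd_lemma}) to $f_\ii$, which is monotone, coverable and satisfies the $\eta$-gap hypothesis, gives $\int_0^\infty S_\ii(t,\phi)\,dt\le L=1+\ln(Q/\eta)$. If $\grd$ has already covered $f_\ii$ by time $i$, then $S_\ii(t,\phi)=0$ for all $t\ge i$; otherwise $\int_i^\infty S_\ii(t,\phi)\,dt\le \int_0^\infty S_\ii(t,\phi)\,dt\le L$. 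Since the latter event has probability exactly $\aa_\ii(i)$, taking expectations yields $\Tilde{G}_{i,\ii}\le L\cdot \aa_\ii(i)$.

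Summing over $\ii\in[k]$ gives $\Tilde{G}_i\le L\sum_{\ii=1}^k \aa_\ii(i)=L\cdot \aa(i)$, which is the desired inequality. There is no serious obstacle here: the only non-trivial ingredient is the per-function harmonic bound, and this is immediate once we observe that adaptive-submodularity and the $\eta$-gap assumption are imposed on each $f_\ii$ separately, so Lemma~\ref{Lemma:harmonic} applies to each function exactly as in the single-function proof. The main point to be careful about is that $\sscore_\ii$ is zero whenever $f_\ii$ is already covered (even if $\grd$ continues to run for the sake of other functions), which is what lets us charge the tail of the integral against the non-completion probability $\aa_\ii(i)$ rather than against the overall non-completion probability.
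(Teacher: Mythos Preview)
Your proposal is correct and follows exactly the approach the paper indicates: decompose $\sscore$ as $\sum_{\ii}\sscore_\ii$, apply the single-function argument of Lemma~\ref{Lemma: upper_bd_lemma} (via Lemma~\ref{Lemma:harmonic}) to each $f_\ii$, and sum. Your observation that the harmonic bound only uses monotonicity, coverability, and the $\eta$-gap of $f_\ii$ (and not the specific greedy rule) is precisely what makes the per-function application valid even though the multi-function greedy is not the single-function greedy for $f_\ii$.
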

 Lemma \ref{Lemma: upper_bd_lemma_multi} follows by  applying Lemma \ref{Lemma: upper_bd_lemma} to each $f_{\ii}$ and adding over $\ii\in[k]$.

\begin{lem}\label{Lemma: low_bd_lemma_multi}
For any $t\ge 0$ , 
\[\E[\sscore(t, \Psi_t)]\geq \sum_{\ii\in [k]}\frac{\aa_{\ii}(t)-\oo_\ii(t/(L\beta))}{t/(L\beta)}=\frac{\aa(t)-\oo(t/(L\beta))}{t/(L\beta)}.\]
Hence, $\Tilde{G}_i \ge L\beta  \int_{i}^{\infty} \left( \frac{\aa(t)-\oo(t/(L\beta)))}{t} \right) dt $ for each time $i\ge 0$.
\end{lem}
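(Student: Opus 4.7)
The plan is to adapt the proof of Lemma~\ref{Lemma: low_bd_lemma} by applying its core argument to each coverage function $f_\ii$ separately, but to carry out the crucial greedy step \emph{after} summing over $\ii \in [k]$ rather than \emph{before}, so that the full MASC greedy criterion from Algorithm~\ref{alg:algo_multiple} can be invoked exactly once (rather than $k$ times, which would cost a factor of $k$). As in Lemma~\ref{Lemma: low_bd_lemma}, the second statement follows from the first by integration over $[i,\infty)$, so I focus on the pointwise bound for fixed $t\ge 0$.

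First, I would introduce the same cost-$t/(L\beta)$ truncation $\topt$ of $\opt$ as in Lemma~\ref{Lemma: low_bd_lemma}, with $I_\infty$ and $P_\infty$ denoting the items selected and the partial realization observed at the end of $\topt$. Since the single policy $\opt$ is used for every $\ii$, we have $\Pr[\topt\text{ covers }f_\ii] = 1 - \oo_\ii(t/(L\beta))$ for each $\ii\in[k]$. Let $R_\ii(\grd,t)$ denote the collection of partial realizations $\psi$ of $\Psi_t$ with $f_\ii(\psi) < Q$. For each such $\ii$ and $\psi \in R_\ii(\grd,t)$, I define the per-function analog of $Z$:
$$Z_\ii(\psi) := \E_{\Phi}\left[\mathds{1}(\psi \preccurlyeq \Phi) \cdot \frac{f_\ii(\psi \cup P_\infty) - f_\ii(\psi)}{Q - f_\ii(\psi)}\right].$$
Monotonicity of $f_\ii$, combined with $f_\ii(P_\infty) = Q$ on the event ``$\topt$ covers $f_\ii$'', yields the lower bound $Z_\ii(\psi) \ge \Pr[(\psi \preccurlyeq \Phi) \wedge (\topt\text{ covers }f_\ii)]$, exactly as in the single-function proof.

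For the upper bound, I copy the chain \eqref{eq:score-LB-condn}--\eqref{eq:score-LB-submod} verbatim, using adaptive submodularity of $f_\ii$, to reach
$$Z_\ii(\psi) \;\le\; \sum_{\psi':\, \psi\preccurlyeq\psi'}\;\sum_{e\notin\dom(\psi')} \Pr[\psi'\preccurlyeq\Phi \wedge X_{e,\psi,\psi'}=1]\cdot c_e \cdot \frac{\Delta_\ii(e|\psi)}{c_e(Q-f_\ii(\psi))}.$$
The crucial twist is to \emph{sum over $\ii$ with $f_\ii(\psi)<Q$ before invoking any greedy inequality}. After swapping the order of summation so that $\sum_\ii$ sits innermost, the inner factor becomes $\sum_{\ii : f_\ii(\psi)<Q} \Delta_\ii(e|\psi)/(c_e(Q-f_\ii(\psi)))$, which by the MASC greedy rule in Algorithm~\ref{alg:algo_multiple} is bounded by $\sscore(t,\psi)$ for \emph{every} candidate item $e\in E\setminus\dom(\psi)$. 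The remainder \eqref{eq:score-LB-cost}--\eqref{eq:score-LB-Z} transfers word for word, using the cost-truncation bound $\sum_{e\in I_\infty}c_e \le t/(L\beta)$, to give
$$\sum_{\ii\in[k]} Z_\ii(\psi) \;\le\; \sscore(t,\psi) \cdot (t/(L\beta)) \cdot \Pr[\psi\preccurlyeq\Phi].$$

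Finally, I sum both bounds on $Z_\ii(\psi)$ over the mutually disjoint partial realizations in $R_\ii(\grd,t)$ and then over $\ii\in[k]$. Arguing as in \eqref{eq:score-LB-2}--\eqref{eq:score-LB-4} with a per-$\ii$ union bound, the lower bound aggregates to $\sum_\ii(\aa_\ii(t)-\oo_\ii(t/(L\beta))) = \aa(t)-\oo(t/(L\beta))$, while the upper bound aggregates to $(t/(L\beta))\cdot \E[\sscore(t,\Psi_t)]$, since $\sscore(t,\psi)=0$ whenever every $f_\ii(\psi)=Q$. Rearranging produces the claimed inequality. The only real obstacle is timing the sum over $\ii$ correctly: applying the single-function greedy bound $\Delta_\ii(e|\psi)/(c_e(Q-f_\ii(\psi))) \le \sscore(t,\psi)$ term-by-term and then summing would pile up $k$ copies of the same inequality and lose a factor of $k$, whereas summing first and then invoking the joint greedy rule preserves the same constants as in Lemma~\ref{Lemma: low_bd_lemma}.
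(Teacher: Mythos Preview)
Your proposal is correct and matches the paper's approach essentially line for line: the paper defines a single quantity $\tilde{Z}:=\E_\Phi\big[\mathds{1}(\psi\preccurlyeq\Phi)\sum_{\ii:f_\ii(\psi)<Q}\frac{f_\ii(\psi\cup P_\infty)-f_\ii(\psi)}{Q-f_\ii(\psi)}\big]$ rather than your per-function $Z_\ii(\psi)$, but this is purely notational since $\tilde{Z}=\sum_{\ii:f_\ii(\psi)<Q}Z_\ii(\psi)$. In particular, you correctly identify the one substantive point---that the sum over $\ii$ must precede the application of the greedy rule so that the full MASC criterion is invoked once---and the remaining aggregation over $\psi$ and $\ii$ agrees with the paper's summation over all partial realizations $R(\grd,t)$.
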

\begin{proof}[Proof Outline]
    We   replicate all the steps in Lemma \ref{Lemma: low_bd_lemma}, except that we redefine $Z$ to be
\[\Tilde{Z}:=\E_{\Phi}\left[\mathds{1}(\psi\preccurlyeq\Phi)\cdot \sum_{\ii\in[k]:f_r(\psi)<Q}\frac{f_\ii(\psi\cup P_{\infty})-f_\ii(\psi)}{Q-f_\ii(\psi)}\right],\]
which takes all $k$ functions into account. Here, $\psi$ is any partial realization observed at time $t$. We then obtain:
$$\sscore(t,\psi)\cdot \frac{t}{L \beta}\cdot \Pr[\psi\preccurlyeq\Phi] \ge \Tilde{Z} \ge \sum_{\ii\in[k]:f_\ii(\psi)<Q}  \Pr[(\psi\preccurlyeq\Phi)\wedge(\topt\text{ covers } f_{\ii})].$$ 

Let $R(\grd,t)$ denote {\em all} the possible partial realizations observed at time $t$ in policy $\grd$. Then,  
\begin{eqnarray*}
        \E[\sscore(t,\Psi_t)]& = & \sum_{\psi\in R(\grd,t)} \Pr[\psi \preccurlyeq \Phi ] \cdot \sscore(t,\psi) \\
        &\geq& \frac{L\beta}{t}\sum_{\psi\in R(\grd,t)}\quad  \sum_{\ii\in[k]:f_\ii(\psi)<Q}  \Pr[(\psi\preccurlyeq\Phi)\wedge(\topt\text{ covers } f_{\ii})]\\
        &=&\frac{L\beta}{t}\sum_{\ii\in [k]} \Pr\left[(\grd \mbox{ doesn't cover } f_{\ii}\mbox{ by time }t  )\wedge(\topt\text{ covers } f_{\ii})\right]  \\
        &\ge& \frac{L \beta}{t}\sum_{\ii\in [k]}\left( \Pr[\grd \mbox{ doesn't cover } f_{\ii}\mbox{ by time }t ]  -   \Pr[\topt\text{ doesn't cover } f_\ii]\right) \\
        &=&\frac{L \beta }{t}\sum_{\ii\in [k]} \left(\aa_\ii\left(t\right)-o_\ii(t/(L \beta))\right)  \geq L\beta \frac{\aa(t)-\oo(t/(L \beta))}{t}. 
\end{eqnarray*}
\end{proof}

Finally, we combine Lemma \ref{Lemma: upper_bd_lemma_multi} and Lemma \ref{Lemma: low_bd_lemma_multi},  as in Theorem \ref{Theorem: THE_thmp}, to obtain:
$$\cpm(\grd)\leq \frac{L^p\beta^{p+1}}{\beta-p}\cdot \cpm(\opt).$$
Setting $\beta=p+1$ to minimize the approximation ratio, we obtain Theorem~\ref{Theorem: multi_thmp}. 

We now list some applications of the \masc result.
\begin{itemize}
\item When items are deterministic, \masc reduces to the deterministic submodular ranking problem, for which an $O(\ln (Q/\eta))$ was obtained in \cite{AzarG11}. We note that the result in \cite{AzarG11} was only for unit costs, whereas our result holds for arbitrary costs. 
 This problem generalizes the min-sum set cover problem, which is NP-hard to approximate better than factor $4$ \cite{FLT04}. For min-sum set cover, the paramters $Q=\eta=1$: so Theorem \ref{Theorem: multi_thmp} implies a {\em tight} $4$-approximation algorithm for it.  
 \item When the outcomes are independent across items, \masc reduces to stochastic submodular cover with multiple functions, which was studied in \cite{INZ12}. We obtain an $4\cdot (1+\ln(Q/\eta))$ approximation ratio that improves the bound of  $56\cdot (1+\ln(Q/\eta))$ in \cite{INZ12} by a constant factor. Although \cite{INZ12} did not try to optimize the constant factor, their approach   seems unlikely to provide   such a small constant factor.  
    \item Consider the following generalization of adaptive viral marketing. Instead of a single quota on the number of influenced nodes, there are $k$ different quotas $Q_1\le Q_2 \le \cdots Q_k$. Now, we want a policy  such that the {\em average} expected cost for achieving these quotas is minimized.   Recall the function $\bar{f}$ defined in \eqref{eq:f-influence} for the single-quota problem. Then, corresponding to the different quotas,   define  functions $f_\ii(\psi) = \frac1{Q_\ii}\cdot \min\{\bar{f}(\psi) , Q_\ii\}$ for each $\ii\in [k]$. Each of these functions is monotone, adaptive-submodular and has maximal value $Q=1$. The parameter $\eta=1/ Q_k$, so we obtain a $4(1+\ln Q_k)$-approximation algorithm.
 \end{itemize}

\section{Computational Results}
In this section, we empirically evaluate the performance of the greedy policy on a real-world dataset of Optimal Decision Tree (ODT), which is one of the applications of \asc (see Section~\ref{subsec:odt}). This dataset has been used in a number of previous papers on ODT~\cite{BBS12,bhavnani2007network,NavidiKN20}. We also provide a lower-bound for the $p^{th}$ moment objective in ODT, and compare the greedy policy's performance to this bound. 

Recall that in ODT, there are $m$ hypotheses (with uniform probabilities) and $n$ binary tests. We assume that all tests have unit costs. The goal is to identify the realized hypothesis by performing sequential tests. The objective is to minimize the $p^{th}$ moment of the testing cost. The case $p=1$ is the usual expected cost objective, which has been analyzed before. We also consider higher moments with $p=2,3$, for which our result provides the first approximation ratio.   

\paragraph{Information theoretic lower bound for ODT.} It is well-known that the expected testing cost for any instance of ODT (with uniform probability and cost) is at least $\log_2(m)$. This follows from the entropy lower bound for binary coding. In fact, we can get a stronger lower bound using Huffman coding~\cite{huffman1952}, which  corresponds to  the ODT instance with $m$ hypotheses and all possible  binary tests. Viewed  this way, any  policy for this ODT instance is a binary tree $T$ with $m$ leaves $L(T)$:  the $p^{th}$ moment objective is then $\frac1m\sum_{i\in L(T)} d(i)^p$ where $d(i)$ is the depth of leaf $i$. 
Below, we show that the Huffman coding construction   also minimizes the $p^{th}$ moment: so it provides a lower bound for ODT with  $p^{th}$ moment objectives.

We define the \emph{Huffman  tree} with $m$ leaves to be a binary tree with all non-leaf nodes having two children, and exactly   $2^{\lceil{\log_2(m) \rceil}} -m$ leaves at depth $\lfloor{\log_2(m)\rfloor}$ and   $2m - 2^{\lceil{\log_2(m) \rceil}}$ leaves at depth $\lceil{\log_2(m)\rceil}$. When $m$ is a power of two, the Huffman  tree corresponds to the ``complete balanced binary tree'' with all  $m$ leaves at depth $\log_2(m)$.
The following lemma (proved in Appendix~\ref{app}) summarizes this lower bound.
 \begin{lem}
\label{lem:logp}
Among all binary trees with $m$ leaves, the Huffman tree minimizes the   sum of $p^{th}$ powers of the leaf-depths, for all $p\ge 1$. Hence, the optimal $p^{th}$ moment for any ODT instance with $m$ hypotheses and unit-cost tests is at least:
    $$ \frac1m\cdot \left(2^{\lceil{\log_2(m) \rceil}} -m \right)\cdot \lfloor{\log(m)\rfloor}^p\, +\, \frac2m\cdot  \left(2m - 2^{\lceil{\log_2(m) \rceil}}\right)\cdot \lceil{ \log(m) \rceil}^p.$$ 
\end{lem}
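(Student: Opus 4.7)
My plan is to prove this by an exchange argument followed by a simple counting step. I will first reduce to full binary trees (every internal node has exactly two children), since any single-child internal node can be contracted, strictly decreasing every descendant leaf depth and hence strictly decreasing $\sum_i d(i)^p$.

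Next, I will show that in any optimal full binary tree $T$ with $m$ leaves, the maximum leaf depth $d_{\max}$ and minimum leaf depth $d_{\min}$ satisfy $d_{\max} \le d_{\min} + 1$. The key exchange: pick a deepest leaf $v$; since $d_{\max}$ is maximal, the sibling $v'$ of $v$ in $T$ must also be a leaf at depth $d_{\max}$ (otherwise $v'$ would have a descendant leaf at depth $>d_{\max}$). I construct $T'$ by deleting $v$ and $v'$ (so their common parent becomes a leaf at depth $d_{\max}-1$) and splitting a shallowest leaf $u$ into two children at depth $d_{\min}+1$; $T'$ is a full binary tree with exactly $m$ leaves, and the change in the objective is
\[\Delta \;=\; \bigl[(d_{\max}-1)^p - 2\,d_{\max}^p\bigr] \;+\; \bigl[2(d_{\min}+1)^p - d_{\min}^p\bigr].\]
The main technical step, and the only place the hypothesis $p \ge 1$ is used, is to verify $\Delta < 0$ whenever $d_{\max} \ge d_{\min}+2$. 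Setting $g(x) := 2x^p - (x-1)^p$, this is equivalent to $g(d_{\max}) > g(d_{\min}+1)$; since $g'(x) = p\bigl[2x^{p-1} - (x-1)^{p-1}\bigr] > 0$ for $x \ge 1$ and $p \ge 1$, the function $g$ is strictly increasing on $[1,\infty)$, so $d_{\max} \ge d_{\min}+2 > d_{\min}+1$ yields the required strict decrease, contradicting optimality.

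Finally, once all leaves lie at two consecutive depths $d$ and $d+1$ with counts $a$ and $b$, the constraints $a+b = m$ together with Kraft's equality $a \cdot 2^{-d} + b \cdot 2^{-d-1} = 1$ (which holds for full binary trees) uniquely determine $a = 2^{d+1}-m$ and $b = 2m - 2^{d+1}$, with non-negativity forcing $d = \lfloor \log_2 m \rfloor$. These are exactly the leaf counts of the Huffman tree, which therefore minimizes $\sum_i d(i)^p$; dividing by $m$ for the uniform prior over hypotheses gives the claimed lower bound on the $p^{th}$ moment of any ODT policy.
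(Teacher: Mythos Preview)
Your proof is correct and follows essentially the same approach as the paper: reduce to full binary trees, use the same exchange (delete a deepest sibling pair and split a shallowest leaf) to force all leaf depths within one of each other, then count. The only differences are cosmetic---the paper verifies $\Delta<0$ via convexity of $x^p$ rather than monotonicity of your auxiliary $g(x)=2x^p-(x-1)^p$, and it does the final leaf count directly instead of invoking Kraft's equality---but the structure and key ideas are identical.
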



\paragraph{Instances.} We use the WISER data (\href{http://wiser.nlm.nih.gov/}{http://
wiser.nlm.nih.gov/}), which   lists 415 toxins and 79 symptoms. Each toxin is associated with a set of observed symptoms that are reported in the data set. This corresponds to an ODT instance where the goal is to identify the toxin by testing for symptoms. For some toxin/symptom combinations it is unknown whether exposure to a toxin definitively exhibits a certain symptom. For such cases, we fill in randomly either a positive or negative outcome for the symptom, and generate 5 such variations, labelled as datasets $A_1$ to $A_5$. 
Given that we fill in the unknown symptoms randomly, some  hypotheses (i.e., toxins) are equivalent (i.e., have  identical set of symptoms); so   we remove all such  duplicates. This is why the number of hypotheses $m$ varies in the instances.

Our results show that the greedy policy is optimal (or extremely close to the optimum) on all the datasets  $A_1$ to $A_5$. 
This indicates that the WISER data set is highly structured with many balanced tests, and that the greedy policy is able to exploit this structure. In order to test the performance on less-structured instances, we also created 5 more instances (labeled $B_1$ to $B_5$) by restricting to a random subset of  15 (out of  79) tests. Again, we remove all duplicate hypotheses.

\paragraph{Results.} In Table~\ref{tab:greedybounds}, we report the performance of the greedy policy on each of the $10$ instances described above.  For each instance, we report the number $m$ of hypotheses and the empirical approximation ratio  (greedy objective divided by the lower-bound from Lemma~\ref{lem:logp}) for $p=1,2,3$. For the expectation objective ($p=1$), we   also report the objective value of the greedy policy and   the two  lower bounds:  entropy-based (which was also used in prior work) and the Huffman bound (Lemma~\ref{lem:logp}). 
We note that  the Huffman bound is indeed better than  the entropy-based lower-bound. For instances $A_1$ to $A_5$, the greedy policy's cost matches  the lower bound, resulting in an empirical approximation ratio of $1$. For the modified instances $B_1$ to $B_5$, our algorithm's performance is still very good: the empirical approximation for $p=1,2,3$ is at most $1.04, 1.09$ and $1.18$  (respectively).

\begin{center}
\begin{table} 
\begin{tabular}{ | c | c| c | c|c| c|c|c|} 
\hline
  \makecell{Instance} & $m$ & \makecell{Entropy \\ bound } & \makecell{Revised \\bound} & \makecell{Sum of \\ costs} & \makecell{Approx. \\factor $p=1$ }& \makecell{Approx. \\ factor $p=2$} & \makecell{Approx. \\factor $p=3$ }\\ 
  \hline
  $A_1$ & 405  & 3508.02 & 3538 & 3538 & 1.0000 & 1.0000 & 1.0000 \\ 
  \hline
  $A_2$ & 395 & 3407.16 & 3438 & 3438 & 1.0000  & 1.0000 &  1.0000\\ 
  \hline
  $A_3$ & 399 & 3447.46 & 3478 & 3479 & 1.0000 & 1.0001 &  1.0001\\ 
  \hline
  $A_4$ & 399 & 3447.46 & 3478 & 3478 & 1.0000 &  1.0000 & 1.0000 \\ 
  \hline
  $A_5$ & 395 & 3407.16 & 3438 & 3439 & 1.0003 & 1.0007 & 1.0012
 \\ 
  \hline
    $B_1$ & 207 & 1592.55 & 1607 & 1666 & 1.0367 & 1.0941
 & 1.1784 \\ 
  \hline
 $B_2$ & 248 & 1972.64 & 1976 & 2019 & 1.0218 & 1.0525 & 1.0931\\ 
  \hline
$B_3$ & 249 & 1982.04 & 1985 & 2025 & 1.0202 & 1.0488 & 1.0868\\ 
\hline
$B_{4}$ & 266 & 2142.71 & 2148 & 2211 & 1.0293 & 1.0715 & 1.1284\\ 
\hline
$B_{5}$ & 274 & 2218.86 & 2228 & 2269 & 1.0184 & 1.0440 & 1.0775\\ 
\hline
\end{tabular}
\caption{Computational results on WISER dataset. \label{tab:greedybounds}}
\end{table}
\end{center}

\section{Conclusions}
We studied  the adaptive-submodular cover problem with $p^{th}$ moment objectives, and showed that the natural greedy policy simultaneously  achieves a $(p+1)^{p+1}\cdot (\ln Q+1)^p$ approximation guarantee for all $p\ge 1$.  Even for the well-studied case of minimizing expected cost ($p=1$), our result provides  the first $O(\ln Q)$ approximation bound.  Our results also extend to the setting with multiple adaptive-submodular functions. 
 While our approximation ratios are the best possible up to constant factors, it would still be interesting to pin down the exact constant. For example, can we get a $\ln (Q)$ approximation ratio for minimizing the expected cost? 
\bibliographystyle{alpha}
\bibliography{references}

\appendix

    \section{Missing Proofs}\label{app}

    \subsection{Upper bounding the score}
    Below, we upper bound the total score for a single realization $\phi$. A similar fact and proof was used previously in \cite{HellersteinKP21,INZ12,AzarG11}. 
    \begin{lem}\label{Lemma:harmonic}
    For any (full) realization $\phi$,
    \[\int_{0}^{\infty}S(t,\phi) dt\leq L = \ln(Q/\eta) +1\]
    \end{lem}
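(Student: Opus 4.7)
The plan is to rewrite the integral $\int_0^\infty S(t,\phi)\,dt$ as a discrete telescoping-like sum over the greedy selections, and then bound that sum via a standard logarithmic estimate, handling the final step separately.

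First, fix any full realization $\phi$ and let $e_1,e_2,\ldots,e_K$ be the (deterministic, given $\phi$) sequence of items chosen by $\grd$ under $\phi$. Let $\psi_i \preccurlyeq \phi$ be the partial realization observed just before selecting $e_i$, so $\psi_1=\emptyset$ and $\psi_{i+1}=\psi_i\cup(e_i,\phi_{e_i})$. By the time model, item $e_i$ occupies a time interval of length $c_{e_i}$, and throughout that interval $S(t,\phi)$ is constant and equal to $\frac{f(\psi_{i+1})-f(\psi_i)}{c_{e_i}[Q-f(\psi_i)]}$. Outside these intervals (after $\grd$ terminates under $\phi$) $S(t,\phi)=0$. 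Hence
\[
\int_0^\infty S(t,\phi)\,dt \;=\; \sum_{i=1}^{K} \frac{f(\psi_{i+1})-f(\psi_i)}{Q-f(\psi_i)}.
\]

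Next, abbreviate $a_i := Q-f(\psi_i)$. Because $f(\emptyset)=0$ we have $a_1=Q$, and because $\grd$ terminates exactly after choosing $e_K$ we have $a_{K+1}=0$. Moreover, since the policy selected $e_K$, we had $f(\psi_K)<Q$, so the hypothesis ``$f(\psi)>Q-\eta\Rightarrow f(\psi)=Q$'' forces $f(\psi_K)\le Q-\eta$, i.e.\ $a_K\ge \eta$. The sum above becomes $\sum_{i=1}^{K}\frac{a_i-a_{i+1}}{a_i}$.

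For the first $K-1$ terms I use the inequality $1-x \le -\ln x$ (valid for $x>0$) with $x=a_{i+1}/a_i$, giving $\frac{a_i-a_{i+1}}{a_i} \le \ln(a_i/a_{i+1})$. These terms telescope:
\[
\sum_{i=1}^{K-1}\frac{a_i-a_{i+1}}{a_i} \;\le\; \ln\frac{a_1}{a_K} \;=\; \ln\frac{Q}{a_K} \;\le\; \ln\frac{Q}{\eta}.
\]
The last term is handled separately and trivially: $\frac{a_K-a_{K+1}}{a_K}=1$ (since $a_{K+1}=0$). Adding the two contributions yields the desired bound $\int_0^\infty S(t,\phi)\,dt \le \ln(Q/\eta)+1 = L$.

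The only subtle point is why we cannot apply the logarithmic inequality to the final term as well (it would blow up because $a_{K+1}=0$); this is precisely where the coverability gap $\eta$ enters, by guaranteeing $a_K\ge\eta$ and letting us pay the last $+1$ in isolation while telescoping the rest. Everything else is routine.
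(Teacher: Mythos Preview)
Your proof is correct and follows essentially the same approach as the paper: both reduce $\int_0^\infty S(t,\phi)\,dt$ to the sum $\sum_i \frac{f_{i+1}-f_i}{Q-f_i}$, split off the final term (contributing the $+1$, using $f_K\le Q-\eta$), and bound the remaining terms by $\ln(Q/\eta)$. The only cosmetic difference is that the paper phrases the last bound as an integral comparison of a step function against $1/(Q-x)$, whereas you use the equivalent inequality $1-x\le -\ln x$ and telescope; these are the same argument in different clothing.
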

        \begin{proof} 
    Under $\phi$, let $e_1, e_2, ..., e_k$ be the sequence of items selected by $\grd$, let $\psi_i$ be the partial realization just before selecting $e_i$, and define $f_i:=f(\psi_i)$. Note that $0\leq f_1\leq f_2\leq ...\leq f_{k+1}= Q$ by monotonicity and the assumption that $f$ is always covered by $\pi$. Moreover, we have $f_{k}\leq Q-\eta$ by the definition of $\eta$: otherwise we would have $f_k=Q$ and $\grd$ would terminate before selecting $e_k$.

    Define a function $g:[0,\infty)\to\R$ by

    \begin{equation}
g(x) := 
    \begin{cases}
        \frac{1}{Q-f_i}, & \text{for } x\in [f_i, f_{i+1}) \text{ and any } i=1,2,...,k-1,\\
        0, & \text{otherwise} .
    \end{cases}
\end{equation}
    
\begin{figure}[h]
    \centering   
\begin{tikzpicture}
\begin{axis}[
            xmin = 0, xmax = 10,
            ymin = 0, ymax = 0.5,
            xtick distance = 1,
            ytick distance = 1,
            grid = none,
            minor tick num = 0,
            major grid style = {lightgray},
            minor grid style = {lightgray!25},
            width = 0.65\textwidth,
            height = 0.35\textwidth,
            xlabel = {$x$},
            ylabel = {$g(x)$},
            legend cell align = {left},
        ]
\addplot[mark=*] coordinates {(0,0.1)};
\addplot[ultra thick][domain=0:2] {0.1};
\addplot[mark=*,fill=white] coordinates {(2,0.1)};
\addplot[mark=*] coordinates {(2,1/8)};
\addplot[ultra thick][domain=2:3.5] {1/8};
\addplot[mark=*,fill=white] coordinates {(3.5,1/8)};
\addplot[mark=*] coordinates {(3.5,1/6.5)};
\addplot[ultra thick][domain=3.5:7] {1/6.5};
\addplot[mark=*,fill=white] coordinates {(7,1/6.5)};
\addplot[mark=*] coordinates {(7,1/3)};
\addplot[ultra thick][domain=7:8.5] {1/3};
\addplot[mark=*,fill=white] coordinates {(8.5,1/3)};

\addplot[blue, domain=0:9.9] {1/(10-x)};
\node[blue, ultra thick] at (5,0.29) { $\overline{g}(x)=\frac{1}{Q-x}$};
\end{axis}
\end{tikzpicture}
    \caption{Example of $g(x)$, where $k=5,Q=10, \eta=1$ and $f_1=0, f_2=2, f_3=3.5, f_4=7, f_5=8.5$. }
    \label{fig:example_function_g}
\end{figure}
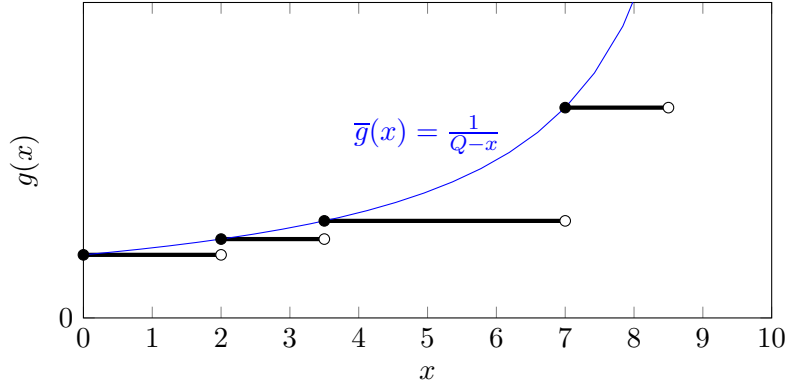

Note that $g(x)\le \frac{1}{Q-x}$ for all $0\le x\le Q-\eta$ (see Figure~\ref{fig:example_function_g} for an example). So,
 \[\int_{0}^{\infty}S(t,\phi) dt=\sum_{i=1}^{k}\frac{f_{i+1}-f_i}{Q-f_i} = \sum_{i=1}^{k-1}\frac{f_{i+1}-f_i}{Q-f_i}+1=\int_{0}^{\infty}g(x)dx+1\leq \int_{0}^{Q-\eta}\frac{1}{Q-x} dx+1=L\]
 The first equality uses the definition of $S(t,\phi)$ and the second equality uses $f_{k+1}=Q$. 
    \end{proof}

\subsection{Proof of Lemma~\ref{lem:logp}}
     Let $T$ denote the binary tree  with $m$ leaves that minimizes the   sum of $p^{th}$ powers of the leaf-depths, for any $p\ge 1$. We will show that $T$  must be the Huffman tree. Let 
     $D_p(T):= \sum_{i\in L(T)} d(i)^p$ be the objective of tree $T$. 
    
    Note that every non-leaf node in $T$ must have 2 children. Indeed, if node $v\in T$ has exactly one child $u$ then we can remove $v$ and hang the subtree rooted at $u$ below $v$'s parent: this   reduces  the objective $D_p(T)$, contrary to the optimality of $T$.

    We now claim that the difference between the maximum/minimum depth of leaves in $T$ is at most one. Let $b$ denote a deepest leaf in $T$, and $b'$ its sibling. Note that $b'$ is  also   a leaf in $T$ because $b$'s parent (which is non-leaf) must have 2 children. Let $a\in T$ be a leaf of minimum depth.     Suppose, for a contradiction, that $d(a)\le d(b)-2$.  Let tree $T'$ be obtained from $T$ by adding two children of $a$ as leaves and removing the leaves $\{b,b'\}$ (which makes their parent a leaf in $T'$). Now,
    \begin{align*}
        D_p(T') - D_p(T) &= 2\cdot (d(a)+1)^p + (d(b)-1)^p - \left( d(a)^p + 2\cdot d(b)^p\right)\\
    &=     2\cdot \left( (d(a)+1)^p - d(b)^p\right) \,+\, (d(b)-1)^p -d(a)^p \\
    &<  (d(a)+1)^p - d(b)^p \,+\, (d(b)-1)^p -d(a)^p\\
    &= (d(a)+1)^p - d(a)^p \,+\, (d(b)-1)^p -d(b)^p    \,\,\,\le \,\,\, 0.
    \end{align*}
    The (strict) inequality uses $d(a)+1<d(b)$, and the last inequality is by convexity of $x^p$ and $d(a)+1\le b(b)-1$. This is a contradiction to the optimality of tree $T$. 

    Let $\ell$ be the maximum leaf-depth in $T$: so every leaf has depth  $\ell-1$ or $\ell$.  
    Let $r$ be the number of leaves at depth $\ell-1$. As there is no leaf at depth less than $\ell-1$ and each non-leaf node has 2 children, tree $T$ has  $2^{\ell-1} -r$ non-leaf nodes at depth $\ell-1$. This also implies that there are exactly $2\cdot(2^{\ell-1} -r)$  leaf nodes at depth $\ell$.  As $T$ has $m$ leaves in total, we must have $r + 2\cdot(2^{\ell-1} -r)=m$, which gives $r=2^\ell-m$. Finally, using the fact that $0\le r\le 2^{\ell-1}$, we get $2^{\ell-1}\le m\le 2^\ell$. So, $\ell=\lceil \log_2 m\rceil$, which means $T$ is the Huffman tree.

\end{document}